\documentclass[sigconf]{aamas} 
\setcopyright{none}
\renewcommand\footnotetextcopyrightpermission[1]{}
\usepackage{balance} 

\usepackage{booktabs} 
\usepackage[ruled]{algorithm2e}
\usepackage{graphicx}
\usepackage{float} 
\usepackage{subfigure}
\usepackage{enumitem}
\usepackage{bbm}
\usepackage[utf8]{inputenc} 
\usepackage[T1]{fontenc}    
\usepackage{hyperref}       
\usepackage{url}            
\usepackage{amsfonts}       
\usepackage{nicefrac}       
\usepackage{microtype}      
\usepackage{xcolor}         

\usepackage{amsmath}
\usepackage{amsthm}
\usepackage{multirow}
\usepackage{pifont}
\usepackage{threeparttable}

\newcommand{\argmax}{\operatorname{argmax}}

\newcommand{\indicator}[1]{\mathcal{I}\left(#1\right)}
\newcommand{\M}{\textit{OPT-Pay}}
\newcommand{\Single}{\textit{OPT-Standard}}

\newcommand{\W}{\textit{WEL}}
\newcommand{\T}{\textit{OPT-Terminate}}
\newcommand{\Linear}{\textit{OPT-Linear}}
\newtheorem{remark}{Remark}

\newtheorem{claims}{Claim}
\newtheorem{main result}{Main Result}

\makeatletter
\gdef\@copyrightpermission{
  \begin{minipage}{0.2\columnwidth}
   \href{https://creativecommons.org/licenses/by/4.0/}{\includegraphics[width=0.90\textwidth]{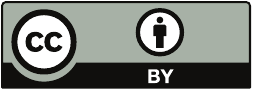}}
  \end{minipage}\hfill
  \begin{minipage}{0.8\columnwidth}
   \href{https://creativecommons.org/licenses/by/4.0/}{This work is licensed under a Creative Commons Attribution International 4.0 License.}
  \end{minipage}
  \vspace{5pt}
}
\makeatother

\title[AAMAS-2026 Formatting Instructions]{The Power of Information for Intermediate States in Contract Design}

    \author{Yirui Zhang}
\affiliation{
  \institution{Tsinghua University}
  \city{Beijing}
  \country{China}}
\email{zhangyr22@mails.tsinghua.edu.cn}

\author{Zhixuan Fang}
\affiliation{
 \institution{Tsinghua University}
 \thanks{Corresponding author: Zhixuan Fang at \emph{zfang@mail.tsinghua.edu.cn}. \\
 This work is supported by Tsinghua University Dushi Program.}
  \city{Beijing}
  \country{China}}
  \affiliation{
 \institution{Shanghai Qi Zhi Institute}
  \city{Shanghai}
  \country{China}}
\email{zfang@mail.tsinghua.edu.cn}

\begin{abstract}
In the conventional principal-agent problem, a principal delegates a task to an agent and formulates a contract to incentivize the agent’s actions on behalf of the principal. However, this framework overlooks the information that is possibly available during the delegation process in some scenarios.  To address this limitation, we propose a novel model that incorporates multiple intermediate states to capture such information revealed during the delegation. Furthermore, to evaluate the impact of the information embedded in these intermediate states, we introduce two distinct contracts: the pay-halfway contract, which provides payments based not only on final outcomes but also on intermediate states, and the terminate-halfway contract, which allows the principal to terminate the delegation process upon encountering undesirable intermediate states.
This leads to the question of whether and how these contract types can leverage intermediate-state information? In particular, we ask: Can these contract types outperform standard contracts, and if so, when and to what extent?  We answer the first question affirmatively and provide several important insights regarding the second, shedding light on the circumstances in which intermediate-state-aware contracts yield substantial advantages. 
\end{abstract}

\keywords{Contract Design}

\newcommand{\BibTeX}{\rm B\kern-.05em{\sc i\kern-.025em b}\kern-.08em\TeX}

\begin{document}


 \pagestyle{fancy}
 \fancyhead{}


\maketitle 


\section{Introduction}
Contract theory is a fundamental area of economics that examines how contractual arrangements are formulated in the presence of information asymmetry.  Prior research has typically utilized the principal-agent framework to study the dynamic between the involved parties, where the principal (employer) often lacks complete information about the agent's (employee's) actions or effort levels. This uncertainty necessitates the creation of incentive-compatible contracts that motivate the agent to act in the principal's best interest while considering the agent's self-interest.

In the classic setup, the principal delegates decision-making authority to the agent and can only observe the final outcome, not the actions taken by the agent. This leads to a potential moral hazard, as the agent may not act in the best interest of the principal. To mitigate this problem, the principal designs incentive-compatible contracts that link compensation to observable outcomes in order to encourage desirable behavior.  The agent, seeking to maximize her own benefits, will then take actions based on the promised compensations. Much of the existing literature in this field (\cite{bolton2004contract,dutting2024combinatorial,guruganesh2023power}) is dedicated to determining the optimal contract arrangement from the principal's perspective.

However, real-world delegation often allows the principal to observe some information during the process. For instance, when assigning a project to an employee, the employer may receive mid-term progress reports, enabling them to monitor the project's advancement. Similarly, in supply chain scenarios, suppliers may provide periodic updates on the production process. Such intermediate information is not captured in the standard principal-agent framework.
To address this limitation, we propose a new model within contract theory: the two-stage delegation process. This model incorporates intermediate states to capture information revealed during the delegation. Specifically, the process consists of two stages: 
(i) The agent takes an initial action and transitions to a specific intermediate state, which embeds the information about the delegation process.
(ii) At the intermediate state, the agent takes further action, resulting in an outcome. Importantly, the principal can observe both the intermediate state and the final outcome.

This extended framework provides a more accurate representation of the dynamic nature of real-world contractual environments. Many practical settings fit naturally into our framework, including procurement, crowdsourcing, and adaptive task assignment. For example, in procurement, the evolving stages of a project can be interpreted as intermediate information revealed progressively over the course of the delegation process. Consequently, within this two-stage model that incorporates intermediate states, it becomes both interesting and necessary to analyze how such information revelation influences contractual relationships and outcomes.

\subsection{Main Results}\label{sec_contribution}
In this section, we summarize our main results, focusing on the central question: whether and how information embedded in intermediate states benefits the principal. 

To explore this, we introduce two conceptually distinct contractual mechanisms: the pay-halfway contract and the terminate-halfway contract. Unlike standard contracts, which provide compensation solely based on the final outcome, the pay-halfway contract allows for compensation upon the realization of specific intermediate states, while the terminate-halfway contract permits early termination of the delegation process when undesirable intermediate states are observed. We analyze and compare the performance of these contracts against the standard outcome-based contract. 

The information embedded in intermediate states comprises two key types: (I) the intrinsic quality of the state, meaning how likely final actions from this state lead to valuable outcomes; and (II) the agent’s initial action choice. 

To examine which type(s) of information benefit the principal, and how the proposed contracts leverage them, we consider three specialized two-stage delegation processes that differ in the informational content of their intermediate states:

\begin{itemize}
    \item In the tree two-stage process, each final outcome is uniquely linked to a single intermediate state, rendering the intermediate state uninformative beyond the final outcome. (No additional information)
    \item In the stochastic first-stage process, there is only one initial action; although intermediate states do not reveal any information about the agent’s initial choice, they carry predictive value regarding the likelihood of favorable final outcomes. (Type I information only)
    \item In the deterministic first-stage process, each intermediate state corresponds to a unique initial action, thereby containing information both about the state itself and fully revealing the agent’s initial choice. (Type I and II information)
\end{itemize}

Our results demonstrate the following:

\begin{itemize}
    \item In the tree two-stage process, both the pay-halfway and terminate-halfway contracts perform identically to the standard contract due to the lack of informative intermediate states.
    \item In the stochastic first-stage process, the terminate-halfway contract can significantly outperform the standard contract, while the pay-halfway contract only match the standard contract.
    \item In the deterministic first-stage process, both contracts can outperform the standard contract and extract  welfare from the informative intermediate states.
\end{itemize}

These findings suggest that the pay-halfway contract primarily utilizes Type II information to infer the agent’s initial action and incentivize desirable behavior. In contrast, the terminate-halfway contract makes use of both Type I and Type II information: it not only deters poor initial actions but also allows the principal to exit early from low-quality intermediate states that are likely to result in unfavorable outcomes.

Finally, we compare the two newly proposed contracts directly. While the terminate-halfway contract can utilize richer information, our analysis shows that it does not always outperform the pay-halfway contract across all settings.


\subsection{Related Works}

Contract theory, as a fundamental discipline within economics, finds application across diverse contexts. It also boasts a rich historical lineage dating back to seminal works such as \cite{holmstrom1979moral,grossman1992analysis}. For an introductory exploration of contract theory, interested readers are directed to the comprehensive volumes  \cite{bolton2004contract} and the introductory chapters provided by \cite{caillaud2000hidden}, which offer foundational insights.

 \noindent\textbf{Contract Design Without States.} Recent research in contract theory spans a wide array of topics. Without introducing states, studies such as \cite{dutting2019simple}, \cite{guruganesh2021contracts}, \cite{guruganesh2023power}, and \cite{castiglioni2022designing} delve into the effectiveness of diverse contract types. For instance, \cite{dutting2019simple} conduct an analysis of linear contracts, demonstrating their optimality in addressing distributional intricacies. In contrast, \cite{guruganesh2021contracts} and \cite{guruganesh2023power} compare various contract types within scenarios characterized by unknown outcome distributions. Additionally, \cite{castiglioni2022designing} investigate the potency of randomization, focusing on the efficient design of random contract menus.
However, it is crucial to emphasize that within the framework of these previous works, there is a notable absence of intermediate states pertaining to the actions of agents. This distinction highlights a gap in the literature that our research aims to address by incorporating intermediate states into contract design. 

\noindent\textbf{Contract Theory in MDPs.}
The literature most closely related to our work studies principal--agent problems in Markov Decision Processes (MDPs). Early work by \cite{fudenberg1990short} and \cite{caillaud2000hidden} analyzes dynamic contracts with hidden actions in stochastic environments and characterizes optimal contracts via dynamic programming. \cite{zhang2008dynamic} consider a related MDP framework with different cost and reward structures and hidden states instead of hidden actions. More recent studies include \cite{zhang2021automated}, which examine repeated interactions where the principal conditions decisions on the agent’s strategic state reports, and \cite{zhang2022efficient,zhang2022planning}, which focus on MDPs in which the agent can terminate the process but does not control actions. \cite{bollini2024contracting} show that Markovian policies are generally suboptimal when agents take costly actions and propose efficient algorithms for computing optimal policies. Finally, \cite{wu2024contractual} study a contractual reinforcement learning setting in which a learning principal influences the agent’s policy through state-contingent payments.

Another research strand focuses on MDP-based contracts under global budget constraints rather than profit-contingent payments. Several studies \cite{zhang2008value,yu2022environment,ben2024principal,zhang2009policy} explore how reward shaping  can augment the principal's profit while considering observable actions. Specifically, \cite{zhang2008value} formulate a mixed-integer linear program, \cite{yu2022environment} consider a time-inconsistent biased agent, \cite{zhang2009policy} study the induction of a pre-specified target policy, and \cite{ben2024principal} focus on instances where polynomial time approximation
algorithms can be developed.

Compared with our work and most studies on MDP-based contract models, the core objectives differ fundamentally. MDP-based contract works generally aim to design an optimal contract within a fully dynamic MDP environment. In contrast, our work treats intermediate states as intermediate information and investigates whether and how access to this information benefits the principal relative to scenarios where such information is not leveraged.

The modeling assumptions also differ. We assume that the principal receives reward only upon reaching the final outcome, whereas MDP-based contract models typically accumulate rewards across states. Moreover, our model adopts a non-recurrent, two-step structure in which the intermediate state represents intermediate information. By contrast, MDP-based contract models may involve recurrent transitions and rely on structural assumptions suited to long-horizon, repeatedly interacting environments.

Intuitively, our model captures a single delegation process in which information is revealed along the way but rewards are realized only at the end. In contrast, most MDP-based contract models reflect successive delegation processes, where the principal commits to contracts and receives rewards at each step. Our framework therefore aligns more naturally with practical delegation settings that involve intermediate information but do not rely on the full machinery of dynamic MDP-based contract models.


\section{Setup}\label{sec_setup}
In this section, we introduce our new model, the two-stage delegation process. For those interested, the standard contract instance can be found in Appendix.

\subsection{Model}
A two-stage delegation process involves a principal and an agent. The principal assigns a task with $M$ possible outcomes to the agent, who must take actions to complete it. The process includes $S$ intermediate states rather than directly producing an outcome.
The agent first chooses one of $N_1$ initial actions, each incurring a cost $c_i$. Upon choosing action $i$, the agent transitions to an intermediate state $s$ according to a probability distribution $\mathcal{F}_i$, where $F_{i,s}$ denotes the probability of reaching intermediate state $s$.
At each intermediate state $s$, the agent selects one of $N_2$ final actions, each with an associated cost $c^s_j$. Selecting final action $j$ at the intermediate state $s$ leads to an outcome based on a probability distribution $\mathcal{F}^s_j$, with $F^s_{j,m}$ representing the probability of outcome $m$.
There exists a null initial action, and at every intermediate state, there is also a corresponding null final action, each incurring zero cost, i.e., $c_i = 0$ and $c^s_j = 0$ for the respective null actions.
The principal's reward depends solely on the realized outcome: a reward $r_m$ is received if the outcome $m$ occurs. The principal observes the intermediate state and the outcome, but not the agent's chosen initial nor the final action.
The expected reward of taking the final action $j$ in intermediate state $s$ is given by $R^s_j = \sum_{m=1}^M F^s_{j,m} r_m$.

An action profile $\boldsymbol{a} = (i, j_1, j_2, \dots, j_S)$ specifies that the agent first chooses the initial action $i$, and then selects the final action $j_s$ when transitioning to intermediate state $s$. The expected reward of profile $\boldsymbol{a}$ is calculated as $$R_{\boldsymbol{a}}=\sum_{s=1}^SF_{i,s}R^s_{j_s}.$$  The expected cost is calculated as $$c_{\boldsymbol{a}}=c_i+\sum_{s=1}^SF_{i,s}c^s_{j_s}.$$ The expected welfare of profile $\boldsymbol{a}$ is defined as $R_{\boldsymbol{a}} - c_{\boldsymbol{a}}$.
The maximal welfare $\mathcal{W}$ is $\W=\max_{\boldsymbol{a}}\{R_{\boldsymbol{a}}-c_{\boldsymbol{a}}\}$.

\subsection{Contracts and Agent's Behavior}
The principal's objective remains consistent with that of the standard contract instance, aiming to incentivize the agent to take actions advantageous to the principal. The principal achieves this by designing contracts that pledge transfers upon specific occurrences to incentivize agent behavior. 

In this section, we will initially outline various contracts. Subsequently, we will delineate the agent's behavior. Finally, we provide illustrative examples to aid understanding and demonstrate the practical application of these contract structures.

\subsubsection{Contracts}
We begin by introducing the standard contract and one of its special cases: the linear contract. Neither of these contracts utilizes information about intermediate states.

\textbf{Standard contract.} A standard contract $\boldsymbol{t}$ is an $M$-dimensional vector, where the $m$-th non-negative entry $t_m$ denotes the transfer made to the agent if outcome $m$ is realized. We denote the expected profit under the optimal standard contract by $\Single$.

\textbf{Linear Contract.}
A linear contract is a specific form of standard contract characterized by a fraction parameter $\alpha$. Under this contract, the agent receives a transfer of $\alpha r_m$ if outcome $m$ occurs. We denote the expected profit under the optimal linear contract by $\Linear$.

Next, we introduce two novel contract types designed to incorporate intermediate-state information: the pay-halfway contract and the terminate-halfway contract. These two contracts employ seemingly opposite mechanisms: one introduces additional transfers at intermediate states, while the other allows for early termination of the delegation process upon observing certain intermediate states.

\textbf{Pay-halfway Contract.}
The pay-halfway contract allows the principal to issue non-negative transfers based on intermediate states in addition to final outcomes. It is represented by a $(S+M)$-dimensional vector $(\boldsymbol{s}, \boldsymbol{t})$, where:
\begin{itemize}
\item $\boldsymbol{s} \in \mathbb{R}_{\geq 0}^S$ specifies the transfer $s_s$ upon reaching intermediate state $s$, and

\item $\boldsymbol{t} \in \mathbb{R}_{\geq 0}^M$ specifies the transfer $t_m$ upon outcome $m$.
\end{itemize}
We denote the expected profit under the optimal pay-halfway contract by $\M$.

\textbf{Terminate-halfway Contract.} The terminate-halfway contract allows the principal to end the delegation process upon observing certain intermediate states. It is defined by a pair $(\boldsymbol{t}, \mathcal{S})$, where:
\begin{itemize}
    \item  $\boldsymbol{t} \in \mathbb{R}_{\geq 0}^M$ denotes the transfer $t_m$ for each final outcome $m$, and
    \item $\mathcal{S} \subseteq \{1, 2, \dots, S\}$ specifies the set of intermediate states at which the process is terminated.
\end{itemize}
If an intermediate state $s \in \mathcal{S}$ is reached, the delegation process ends immediately, and both the principal and the agent receive zero payoff. The expected profit under the optimal terminate-halfway contract is denoted by $\T$.


\subsubsection{Agent's Behavior}We impose the standard assumptions of incentive compatibility (IC) and individual rationality (IR) from contract theory. In our setting, the IR constraint is readily satisfied due to the availability of null actions, which guarantee the agent a nonnegative outside option.

To analyze the IC constraint, note that once the principal commits to a contract, the agent faces a sequential decision problem. The agent selects actions to maximize her expected utility, and the optimal action profile can be characterized by backward induction. The agent first computes the optimal final action and corresponding utility for each intermediate state. Taking these into account, along with possible transfers at intermediate states and the risk of early termination, she then selects the optimal initial action. 

Specifically, given a contract:
(1) $\boldsymbol{t}$ (standard),
(2) $(\boldsymbol{s}, \boldsymbol{t})$ (pay-halfway), or
(3) $(\boldsymbol{t}, \mathcal{S})$ (terminate-halfway), the agent first determines the final optimal action $j^*_s$ for each intermediate state $s$ as:
$$j^*_s\in \argmax_{j} \sum_{m=1}^M F^s_{j,m}t_m-c^s_j.$$ 

Let $U^s$ denote the maximum utility the agent can obtain at intermediate state $s$: $$U^s=\max_{j}\sum_{m=1}^M F^s_{j,m}t_m-c^s_j.$$ Then, the agent selects the initial action $i^*$ by considering the utility $U^s$, the intermediate transfers $s_s$, and the risk of termination:  $$i^*\in \argmax_i\sum_{s=1}^S \mathbbm{1}(s\notin \mathcal{S})F_{i,s}(U^s+s_s)-c_i.$$ 
Upon reaching an intermediate state $s \notin \mathcal{S}$, the agent selects final action $j^*_s$. We assume that in case of ties, the agent breaks ties in favor of the principal. \footnote{This can be achieved by utilizing arm recommendations.} Accordingly, the expected payment from the principal to the agent is: $$T=\sum_{s=1}^S \mathbbm{1}(s\notin \mathcal{S})F_{i^*,s}(F_{j_s^*,m}^st_m+s_s).$$

\subsubsection{Examples of Newly Proposed Contracts}\label{sec_example}
In this section, we provide examples of the newly proposed contracts to demonstrate their practical use and deepen the understanding of our model.
\begin{table}
\centering
\begin{tabular}{cccc}
  \toprule
         & \textbf{State} $1$, $\textit{fail}$ & \textbf{State} $2$, $\textit{pass}$                                               \\ \midrule
  \textbf{Initial Action} $1$, $\textit{effort}$ & \multirow{2}{*}{$0.1$} & \multirow{2}{*}{$0.9$}\\ Cost $1.8$ \\
  \textbf{Initial Action} $2$, $\textit{null}$ & \multirow{2}{*}{$1$} &  \multirow{2}{*}{$0$} \\ Cost $0$\\ \midrule
  state $1$, $\textit{fail}$        & \textbf{Outcome} $1$, $\textit{fail}$ & \textbf{Outcome} $2$, $\textit{success}$   \\
   & Reward $0$           &  Reward $5$                                            \\ \midrule
  \textbf{Final Action} $1$, $\textit{effort}$ & \multirow{2}{*}{$0.2$} & \multirow{2}{*}{$0.8$}\\ Cost $2$ \\
  \textbf{Final Action} $2$, $\textit{null}$ & \multirow{2}{*}{$0.9$} &  \multirow{2}{*}{$0.1$} \\ Cost $0$\\ \midrule
state  $2$, $\textit{pass}$ &  &   \\ \midrule
  \textbf{Final Action} $1$, $\textit{effort}$ & \multirow{2}{*}{$0$} & \multirow{2}{*}{$1$}\\ Cost $1$ \\
  \textbf{Final Action} $2$, $\textit{null}$ & \multirow{2}{*}{$0$} &  \multirow{2}{*}{$1$} \\ Cost $0$\\ 
  \bottomrule
\end{tabular}
\caption{The instance of Example \ref{example_median}
}\vspace{-0.35 in}
\label{table_median}
\end{table}
\begin{example}[Pay-halfway Contract] \label{example_median}
If a teacher assigns a programming task to a student, her primary concern is the final outcome: success or failure. The teacher receives a reward of $r = 5$ only if the student successfully completes the program (denoted as "success"). There is a mid-term presentation to evaluate the student's progress, with outcomes either "pass" or "fail". If the student delivers an outstanding presentation, reaching the intermediate state "pass", the likelihood of achieving "success" increases. Otherwise, the program is more likely to fail.

Initially, the student has two choices: to exert effort or not, denoted as "effort" and "null". Exerting effort incurs a cost of $1.8$ and gives a high probability ($p = 0.9$) of reaching the "pass" state. Choosing "null" inevitably leads to failing the presentation. After the mid-term presentation, upon reaching an intermediate state, the student again faces two choices: to exert effort or not.
If the student passes the presentation, either action leads to a successful final outcome, but "effort" incurs an additional cost of $1$. If the student fails the presentation, the "effort" action incurs a cost of $2$ and has a $0.8$ probability of leading to success, while the "null" action results in only a $0.1$ probability of success.

In this scenario, although the teacher doesn't directly benefit from the student's good performance in the mid-term presentation, she may still choose to reward the student for doing well in order to incentivize the student’s initial effort. For instance, the teacher might promise a reward of $2$ if the student achieves "${\textit{pass}}$" in the mid-term presentation and $0.1$ if the student successfully completes the program. This payment arrangement resembles a pay-halfway contract $(\boldsymbol{s},\boldsymbol{t})$, where $\boldsymbol{s}=(0,2), \boldsymbol{t}=(0,0.1)$.
Given this "contract", the student employs backward induction. At the "$\textit{pass}$" state, she chooses the action "$\textit{null}$", resulting in an expected utility of $U_1=0.1$. At the "$\textit{fail}$" state, she also chooses the null action, yielding an expected utility of $U_2=0.01$. When determining the initial action, if she exerts effort, the expected utility is $(U_1+2)p+U_2(1-p)-c=0.091$. Otherwise, the expected utility is $0.01$. Consequently, the student opts to exert effort initially and takes the null action regardless of the state. The expected profit for the principal is $p(r-0.1-2)+(1-p)(0.1(r-0.1))=2.659$.
\end{example}

\begin{table}
\centering
\begin{tabular}{cccc}
  \toprule
         & \textbf{State} $1$, $\textit{bad}$ & \textbf{State} $2$, $\textit{well}$                                               \\ \midrule
  \textbf{Initial Action} $1$, $\textit{effort}$ & \multirow{2}{*}{$0.01$} & \multirow{2}{*}{$0.99$}\\ Cost $8$ \\
  \textbf{Initial Action} $2$, $\textit{null}$ & \multirow{2}{*}{$1$} &  \multirow{2}{*}{$0$} \\ Cost $0$\\ \midrule
  state $1$, $\textit{bad}$        & \textbf{Outcome} $1$, $\textit{failure}$ & \textbf{Outcome} $2$, $\textit{success}$   \\
   & Reward $0$           &  Reward $10$                                            \\ \midrule
  \textbf{Final Action} $1$, $\textit{effort}$ & \multirow{2}{*}{$0$} & \multirow{2}{*}{$1$}\\ Cost $1$ \\
  \textbf{Final Action} $2$, $\textit{null}$ & \multirow{2}{*}{$0$} &  \multirow{2}{*}{$1$} \\ Cost $0$\\ \midrule
state  $2$, $\textit{well}$ &  &   \\ \midrule
  \textbf{Final Action} $1$, $\textit{effort}$ & \multirow{2}{*}{$0.4$} & \multirow{2}{*}{$0.6$}\\ Cost $4$ \\
  \textbf{Final Action} $2$, $\textit{null}$ & \multirow{2}{*}{$0.9$} &  \multirow{2}{*}{$0.1$} \\ Cost $0$\\ 
  \bottomrule
\end{tabular}
\caption{The instance of Example \ref{exa_terminal}
}\vspace{-0.35 in}
\label{table_terminal}
\end{table}
\begin{example}[Terminate-halfway Contract]\label{exa_terminal}
When an employer delegates a task to a worker, the employer's primary concern is the completion of the task. However, there may be a need to estimate the progress of the task. If the worker performs poorly during an interim evaluation, the employer might terminate the delegation process and reassign the task to another worker.

Specifically, there are two potential outcomes: "success" and "failure", along with two intermediate states: "well" and "bad". The employer is rewarded with $r = 10$ only upon the successful completion of the task. The worker has two initial actions and two final actions: "effort," which entails exertion at a cost of $8$ and leads to the "well" state with a high probability of $p = 0.99$, and "null," representing initial action with zero cost that leads to the "bad" state. At the "well" state, both final actions result in success. In contrast, at the "bad" state, choosing "effort" results in success with a probability of $0.6$ at an additional cost of $4$, while selecting "null" results in failure with a probability of $0.9$.

In this scenario, the employer may promise to give the worker a payment of $8.2$ if the task is completed successfully, i.e., if the "success" outcome is reached. Additionally, the employer will terminate the delegation process if the worker performs poorly at the interim evaluation, i.e., if the "bad" intermediate state is reached. This can be viewed as a terminate-halfway contract represented by $(\boldsymbol{t},\mathcal{S})$ where $\boldsymbol{t}=(0,8.2)$ and $\mathcal{S}=\{bad\}$. Regarding this contract, the worker first computes that at the "well" state, she will take the final action "null," resulting in an expected utility of $8.2$. Then, she recognizes that she will get zero utility regardless of the final action she takes at the "bad" state. Consequently, by simple computation, she decides to take the initial action "effort." In this way, the employer earns the expected profit of $1.8$.
\end{example}

\subsection{Specialized Two-stage Delegation Process}\label{sec_special}

Intermediate states primarily convey two types of information: 

\begin{itemize}
    \item[(I)] the intrinsic quality of the state itself, specifically, the likelihood that subsequent final actions taken from this intermediate state will lead to valuable outcomes; and 
    \item[(II)] the agent’s initial action choice. 
\end{itemize}

In this section, we introduce several specialized two-stage delegation processes that differ in the type and extent of information embedded in intermediate states. Our objective is to examine whether, and which type of, information benefits the principal. More specifically, we aim to assess whether and how the newly proposed contracts can effectively leverage these different types of information. 

 In the tree two-stage process, each outcome can be reached through only one specific intermediate state. Consequently, the intermediate state appears to offer no additional information beyond what is ultimately revealed by the final outcome.  
 
 In the random first-stage  process, there is only one initial action, which may lead to multiple intermediate states. These intermediate states do not reveal any additional information about the agent’s initial choice (since there is only one). However, the distribution of final actions leading to outcomes can vary across different intermediate states. Some intermediate states may be more likely to result in favorable outcomes than others.  Therefore, intermediate states in this setting primarily convey information about the intrinsic quality of the states themselves (Type I information). 
 
 Finally, in the deterministic first-stage process, each intermediate state corresponds uniquely to a specific initial action. As a result, observing the intermediate state fully reveals the agent’s initial decision and may also convey information about the quality of ensuing outcomes. Hence, intermediate states in this case encapsulate both types of information and are potentially highly informative.

We now proceed to formally define these specialized two-stage delegation processes:

\begin{itemize}
    \item \textbf{Tree Two-stage Process}: In this process, each final outcome can only be realized through a unique intermediate state. Formally, for each outcome $m$, there exists at most one intermediate state $s$ such that $\sum_{j\in[N_2]} F^s_{j,m} > 0$. 
    \item  \textbf{Stochastic First-stage Process}: In this process, there is only one initial action, i.e., $N_1 = 1$.
    \item \textbf{Deterministic First-stage  Process}: In this process, each initial action deterministically leads to exactly one intermediate state. Specifically, for every initial action $i$, there exists an  intermediate state $s$ such that $F_{i,s} = 1$.
\end{itemize}

    \section{Upper Bound Results}\label{general}
In this section, we briefly explore the role and impact of information embedded in intermediate states within our framework, providing several upper-bound results.

We first analyze the extent to which access to intermediate-state information improves the optimal welfare compared to the profit achievable under a standard contract. By establishing upper bounds through comparisons between the welfare and the profit under the optimal standard contract, we derive limits on the maximum profit ratio that our newly proposed contracts can achieve relative to the standard contract.

We show that, in the general two-stage delegation process, the total welfare is at most $O(SN_1N_2)$ times the profit. We also consider the specialized two-stage delegation process and derive corresponding upper bound results.
\begin{theorem}\label{theorem_median_up}
Consider a two-stage delegation process with $S$ intermediate states, $N_1$ initial actions, and $N_2$ final actions, the welfare is no more than $O(SN_1N_2)$ times the profit obtained by the optimal standard contract, i.e., 
\begin{equation}
    \W\leq O(SN_1N_2) \cdot \Single.
\end{equation}
 \end{theorem}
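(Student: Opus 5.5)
The plan is to reduce the two-stage process to a standard single-stage contract instance and then invoke the known linear-contract bound. The bound we use (Dütting et al.) says that in a standard instance with $n$ actions, the optimal linear contract obtains at least a $1/n$ fraction of the optimal welfare. The difficulty is that the two-stage process has $N_1 N_2^S$ pure action profiles, which would give an exponential bound. So we must exploit the structure to get $O(SN_1N_2)$.

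The route is through linear contracts, since $\Linear\le\Single$. Under a linear contract with share $\alpha$, the agent's problem is to maximize $\alpha R_{\boldsymbol{a}}-c_{\boldsymbol{a}}$. By backward induction, the agent's utility at state $s$ is $U^s(\alpha)=\max_j(\alpha R^s_j-c^s_j)$, which is convex and piecewise linear in $\alpha$ with at most $N_2$ pieces. The agent's total utility is
\[
U(\alpha)=\max_i\Bigl(\sum_s F_{i,s}U^s(\alpha)-c_i\Bigr).
\]
Each inner term is a sum of $S$ convex piecewise-linear functions with at most $N_2$ pieces each, so it has at most $S(N_2-1)+1$ pieces. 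Taking the maximum over $N_1$ such functions gives a convex piecewise-linear $U$ with at most $K=O(SN_1N_2)$ linear pieces, because breakpoints of an upper envelope come from breakpoints of the constituents or from crossings, and convexity bounds the number of crossings. Equivalently, as $\alpha$ ranges over $[0,1]$, the agent's best response (with ties broken toward the principal) takes at most $K$ distinct profiles $\boldsymbol{a}_1,\dots,\boldsymbol{a}_K$, with increasing rewards $R_k$ at breakpoints $\alpha_1<\dots<\alpha_K$. At $\alpha=1$ the agent maximizes welfare, so $R_K-c_K=\W$.

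Next we run the standard linear-contract telescoping argument on this breakpoint sequence. By the envelope theorem, $U(1)=\int_0^1 R(\alpha)\,d\alpha$, where $R(\alpha)$ is the induced reward, a step function. Hence
\[
\W=U(1)=\sum_k R_k(\alpha_{k+1}-\alpha_k),
\]
with $\alpha_{K+1}=1$. The principal's profit at breakpoint $\alpha_k$ is $(1-\alpha_k)R_k$, so each summand satisfies
\[
R_k(\alpha_{k+1}-\alpha_k)\le R_k(1-\alpha_k)\le \Linear.
\]
Summing over the $K$ breakpoints gives $\W\le K\cdot\Linear\le O(SN_1N_2)\,\Single$.

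The main obstacle is the counting step: bounding the number of distinct best-response profiles (pieces of $U$) by $O(SN_1N_2)$ rather than $N_1N_2^S$. The first approach to try is the convexity argument above. Each $G_i=\sum_s F_{i,s}U^s-c_i$ has slopes changing only at the union of the $S$ states' breakpoints, so it has at most $S(N_2-1)+1$ pieces. The upper envelope of $N_1$ convex piecewise-linear functions with these piece counts has at most $\sum_i(\text{pieces of }G_i)+N_1$ pieces. The intuition is that as $\alpha$ increases the slope $R(\alpha)$ of $U$ only increases, and each segment of the envelope is a segment of some $G_i$ that it can enter at most once per consecutive run. If this envelope count turns out to be subtle, a cruder bound is available: the set of all breakpoints of all $G_i$ together with the pairwise crossings of linear pieces. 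That bound is polynomial but weaker, so the intended argument likely refines it via monotonicity of slopes. Those slopes are the realized rewards $R(\alpha)$, which are nondecreasing and take values among the at most $N_1(S(N_2-1)+1)$ slopes of the $G_i$. The number of distinct values, and hence of pieces, is therefore $O(SN_1N_2)$, which settles the count cleanly.
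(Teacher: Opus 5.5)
Your proposal is correct and takes essentially the paper's route: pass to linear contracts via $\Single\ge\Linear$, bound $\W$ by (number of breakpoints)$\times\Linear$ through the telescoping argument, and show there are $O(SN_1N_2)$ breakpoints by combining the at most $SN_2$ per-state final-action breakpoints with at most $N_1$ initial-action switches between consecutive ones. Your envelope-integral telescoping is equivalent to the paper's use of Observation 6 of D\"utting et al., and your slope-distinctness count repackages the paper's interval-by-interval count globally: each maximal piece of the convex $U$ has a distinct slope taken from the at most $N_1(S(N_2-1)+1)$ piece-slopes of the $G_i$, which gives the same bound, so the earlier hedging about envelope piece counts can be dropped.
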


\begin{proof}[Proof Sketch]

   Firstly, given that $ \Single \geq \Linear$, we can constrain the ratio of welfare and profit obtained through the optimal standard contract by the ratio of welfare and profit obtained through the optimal linear contract, i.e. $\frac{\W}{\Single} \leq \frac{\W}{\Linear}$. Thus, our focus narrows down to analyzing the ratio of welfare and profit achieved through the optimal linear contract, denoted as $r = \frac{\W}{\Linear}$.

Consequently, we examine the performance of the optimal linear contract, starting with the aspect of breakpoints. A breakpoint $\alpha$ marks the threshold at which the agent's chosen action undergoes a transition. It can be obtained that the welfare can be bounded in terms of the number of breakpoints and the profit under the optimal standard contract. Specifically, the following inequality holds:
\[
\W \leq n \Single
\]
where \(n\) is the number of breakpoints.

There are at most $N_1(N_2)^S$
  possible action profiles, so at most that many breakpoints, but this is a loose bound. 
  For each intermediate state $s$, there can be at most $N_2$ breakpoints (one for each possible final action), denoted as $\alpha_1^s, \alpha_2^s, \ldots, \alpha_{N_2}^s$. Sort these breakpoints in increasing order and relabel them accordingly, resulting in $\alpha_1, \alpha_2, \ldots, \alpha_m$. Consequently, there are at most $SN_2$ breakpoints in total for the final actions..

 Between any two adjacent breakpoints $\alpha_i$ and $\alpha_{i+1}$, the choice of the agent's final action remains unchanged, and the expected reward of every initial action remains constant as well. Thus, we can interpret this process as a new standard contract instance, with each intermediate state acting as an outcome. Under this perspective, there are at most $N_1$ breakpoints between $\alpha_i$ and $\alpha_{i+1}$. Consequently, there are at most $SN_1N_2$ breakpoints in total.

In conclusion, with $SN_1N_2$ breakpoints, the ratio of welfare and profit achieved through the optimal linear contract, denoted as $r$, is bounded by no more than $SN_1N_2$. Thus, we arrive at the desired result.
\end{proof}
\vspace{-0.1 in}
\begin{claims}\label{claim_special}
     Consider a two-stage delegation process with $S$ intermediate states, $N_1$ initial actions, and $N_2$ final actions: 

  \begin{itemize}
\item[(1)] In a deterministic first-stage process, welfare is bounded by $\W \leq O(\min\{S, N_1\} \cdot N_2) \cdot \Single$.
\item[(2)] In a stochastic first-stage process, welfare is bounded by $\W \leq O(S N_2) \cdot \Single$.
\end{itemize}

\end{claims}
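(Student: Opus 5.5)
The plan is to reuse the breakpoint machinery from the proof of Theorem~\ref{theorem_median_up} essentially verbatim. The chain $\W/\Single \le \W/\Linear$ still holds, and the linear-contract fact $\W \le n\cdot\Linear$ still applies, where $n$ is the number of breakpoints of $\alpha$ at which the agent's induced action profile changes as $\alpha$ goes from $0$ to $1$. In each specialized process I will therefore only recount the breakpoints and show that the count is $O(\min\{S,N_1\}N_2)$ or $O(SN_2)$, respectively. As in the theorem, the final-stage breakpoints are counted per intermediate state. The agent's utility at state $s$ is $U^s(\alpha)=\max_j(\alpha R^s_j-c^s_j)$, which is the upper envelope of $N_2$ lines. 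Hence the optimal final action at $s$ changes at most $N_2$ times.

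\textbf{Stochastic first-stage process ($N_1=1$).} There is only one initial action, so the first stage never contributes a breakpoint. The agent's profile changes only when some final action $j^*_s$ changes. Thus $n\le SN_2$, and $\W\le SN_2\cdot\Linear\le O(SN_2)\Single$. This is just the general count with $N_1$ set to $1$, so the general argument's bound $SN_1N_2$ collapses to $SN_2$ directly.

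\textbf{Deterministic first-stage process.} Each initial action $i$ leads with probability one to a state $s(i)$. The agent's utility from choosing $i$ is therefore $U^{s(i)}(\alpha)-c_i$, and her overall utility is
\[
\max_i\bigl(U^{s(i)}(\alpha)-c_i\bigr)=\max_{s\in\mathrm{Im}(s(\cdot))}\Bigl(U^{s}(\alpha)-\min_{i:s(i)=s}c_i\Bigr).
\]
Only the cheapest initial action leading to each state matters, since ties are broken in the principal's favor and the induced reward and payment are the same. So we may assume at most $\min\{S,N_1\}$ relevant initial actions, each paired with a distinct reachable state. Only states that can actually be reached matter at the second stage, and there are at most $\min\{S,N_1\}$ of them. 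The overall utility is then the upper envelope of $\min\{S,N_1\}N_2$ lines $\alpha R^s_j-c^s_j-c_{i(s)}$, one for each pair of a reachable state $s$ and a final action $j$. Each change of the induced profile corresponds to a vertex of this convex piecewise-linear envelope, and each line appears on the envelope on at most one interval. Hence $n\le \min\{S,N_1\}N_2$.

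The main obstacle is making sure the breakpoint lemma is applied to the right object. The profit bound $\W\le n\cdot\Linear$ is derived by treating each distinct profile as an "action" with reward $R_{\boldsymbol a}$ and cost $c_{\boldsymbol a}$. I must check that the envelope formulation in the deterministic case indeed gives a standard-contract instance whose actions are exactly the $(s,j)$ pairs. This holds because $R_{\boldsymbol a}=R^{s}_{j}$ and $c_{\boldsymbol a}=c_i+c^s_j$ when $F_{i,s}=1$. The lemma's requirement of a null action is also met, by the null initial action paired with the null final action. With that identification, both claims follow from the counting above, up to constant factors.
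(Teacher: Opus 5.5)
Your proposal is correct and follows essentially the same route as the paper. Both bound $\W/\Single$ by $\W/\Linear$. Both reduce the deterministic case to a standard instance whose actions are the (reachable state, final action) pairs, which gives at most $\min\{S,N_1\}N_2$ breakpoints. Both obtain the stochastic case as the $N_1=1$ specialization of the breakpoint count in Theorem~\ref{theorem_median_up}. Your cheapest-initial-action-per-state and upper-envelope argument just makes explicit the paper's informal ``without loss of generality $S=N_1$'' discarding step.
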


Notably, total welfare provides only a loose upper bound on the principal’s attainable profit under any contract. Since welfare is defined as the reward minus the agent’s cost, full extraction would leave the agent with negligible utility, which may render such contracts infeasible (e.g., involving negative transfers). 

The following claim demonstrates that, in the tree two-stage process, neither of the newly proposed contracts outperforms the standard contract. This result is consistent with the observation that, in the tree two-stage process, the intermediate state provides no additional information beyond what is ultimately revealed by the final outcome.

\begin{claims}\label{theorem_tree}
    Consider a tree two-stage process with $S$ intermediate states, $N_1$ initial actions, and $N_2$ final actions, the optimal  pay-halfway contract and the optimal terminate-halfway can only yield the same profit with the optimal standard contract, i.e.,
    \begin{equation}
    \M = \T=\Single.
\end{equation}    
\end{claims}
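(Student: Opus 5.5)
Both inequalities $\M\ge\Single$ and $\T\ge\Single$ are immediate. A standard contract $\boldsymbol{t}$ is the pay-halfway contract $(\boldsymbol{0},\boldsymbol{t})$, and it is also the terminate-halfway contract $(\boldsymbol{t},\emptyset)$. In each case the agent's backward induction, her tie-breaking, and the resulting payment coincide with those under $\boldsymbol{t}$. The work is in the two reverse inequalities. For each, I would take an arbitrary contract of the new type and build a standard contract that gives the principal at least the same profit.

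\textbf{Pay-halfway.} Let $(\boldsymbol{s},\boldsymbol{t})$ be given. By the tree property, each outcome $m$ has at most one intermediate state $\sigma(m)$ through which it is reachable. Outcomes unreachable from every state can be ignored, since they occur with probability zero. I would define $t'_m = t_m + s_{\sigma(m)}$. The difficulty is that reaching $s$ does not guarantee that some outcome is realized, because the distributions $\mathcal{F}^s_j$ are supported only on outcomes $m$ with $\sigma(m)=s$. Hence, for every final action $j$ at $s$,
\[
\sum_m F^s_{j,m}t'_m=\sum_m F^s_{j,m}t_m+s_s .
\]
So at each state the agent's utility of every final action is shifted by the same constant $s_s$. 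Her optimal final actions and tie-breaking are therefore unchanged, and her continuation value becomes $U^s+s_s$. This is exactly the quantity she uses in choosing the initial action under the pay-halfway contract. It follows that the induced action profile is identical and the expected payment is identical, since $F_{i,s}(\sum_m F^s_{j,m}t_m+s_s)$ matches term by term. The principal's profit is therefore equal under the two contracts.

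\textbf{Terminate-halfway.} Let $(\boldsymbol{t},\mathcal{S})$ be given. Terminating at $s$ forfeits the reward from $s$ and gives the agent zero continuation utility. I would define $t'_m=0$ for outcomes with $\sigma(m)\in\mathcal{S}$ and $t'_m=t_m$ otherwise. Under $\boldsymbol{t}'$, every state in $\mathcal{S}$ gives the agent $U'^s=\max_j(-c^s_j)=0$, attained by the null action; every other state keeps its old value. Thus the agent's initial-action objective is the same as under the terminate-halfway contract, so she chooses the same $i^*$ with the same tie-breaking. At the states in $\mathcal{S}$, she picks a final action that is optimal for her with utility $0$, and ties are broken in the principal's favor. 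On states outside $\mathcal{S}$, behavior and payments coincide with those under the terminate-halfway contract. On states in $\mathcal{S}$, the principal now pays nothing and receives nonnegative reward, whereas before she received zero. Summing, the profit under $\boldsymbol{t}'$ is at least the profit under $(\boldsymbol{t},\mathcal{S})$, so $\T\le\Single$.

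\textbf{Main obstacle.} The delicate step is the tie-breaking in the terminate-halfway argument. Tie-breaking in favor of the principal is a joint choice over the whole action profile, so I would verify it at that level. The agent's optimal set of initial actions is the same under both contracts, and any choice of final action at the states in $\mathcal{S}$ yields the principal weakly more profit than termination. Consequently, the principal-optimal tie-broken profile under $\boldsymbol{t}'$ gives profit at least that of the profile that replicates the terminate-halfway outcome. The remaining requirements are routine: nonnegativity of $\boldsymbol{t}'$, and the fact that outcomes unreachable from any state do not matter.
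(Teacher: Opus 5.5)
Your proposal is correct and follows essentially the same route as the paper: for pay-halfway you fold each state payment $s_s$ into the transfers on the outcomes reachable from $s$, and for terminate-halfway you zero out the transfers on outcomes whose unique predecessor state is blocked, exactly as the paper does. Your version is somewhat more careful than the paper's, which simply asserts that the same action profile is induced: you handle tie-breaking at the profile level and note that the principal weakly gains at blocked states. One wording slip to fix is the sentence saying that reaching $s$ ``does not guarantee'' an outcome, since it is precisely the tree property (every $\mathcal{F}^s_j$ is supported on outcomes whose predecessor is $s$) that gives $\sum_m F^s_{j,m}t'_m=\sum_m F^s_{j,m}t_m+s_s$.
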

\section{Pay-halfway Contract}\label{sec_median}
In this section, we conduct a detailed analysis of the pay-halfway contract, with a focus on the central questions: \emph{Does the pay-halfway contract leverage information embedded in intermediate states to improve the principal’s profit, and if so, how and to what extent? }Specifically, we compare the performance of the pay-halfway contract with that of the standard outcome-based contract. Our aim is to determine whether the pay-halfway contract can generate strictly higher profit than the standard contract, and to identify the conditions under which such improvement arises, as well as the magnitude of its potential advantage.

We begin by referencing Example \ref{example_median} to affirmatively answer our first question. That is, the pay-halfway contract can indeed generate higher profits compared to a standard contract. In Example \ref{example_median}, while the optimal standard contract $\boldsymbol{t}=(0,\frac{20}{9})$ yields an approximate profit of $2.53$, the pay-halfway contract we provide in the Example ($(\boldsymbol{s},\boldsymbol{t})$, where $\boldsymbol{s}=(0,2), \boldsymbol{t}=(0,0.1)$) achieves an expected profit of $2.659$. This highlights the effectiveness of the pay-halfway contract.

The remainder of the section is structured as follows. We first present Theorem \ref{theorem_median_cost}, which shows that the pay-halfway contract can substantially reduce the principal’s total payment, thereby illustrating its effectiveness relative to the standard contract. We then analyze a series of specialized delegation processes to investigate the conditions under which the pay-halfway contract outperforms the standard contract, and to what extent.

The following theorem presents an instance that highlights the power of the pay-halfway contract in reducing payments: the ratio of payments required to incentivize the optimal action profile under the standard contract versus the pay-halfway contract can become arbitrarily large.

\begin{theorem}\label{theorem_median_cost}
    There is a two-stage delegation process with $S$ intermediate states, $N_1$ initial actions, and $N_2$ final actions, where the payment ratio between incentivizing  the same action profile $\boldsymbol{a}^*$ by the standard contract and the pay-halfway contract approaches infinity. The action profile $\boldsymbol{a}^*$ represents the optimal action profile in terms of maximizing welfare.
\end{theorem}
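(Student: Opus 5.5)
The plan is to build an explicit parametric instance, in the spirit of Example \ref{example_median}, in which the welfare-optimal profile $\boldsymbol{a}^*$ consists of a costly initial action followed by the null final action at every intermediate state. Standard incentives for the initial action are expensive in such an instance, while a transfer at the intermediate state is cheap. It suffices to use two intermediate states, two initial actions, two final actions and two outcomes; the remaining states, actions and outcomes can be padded with dominated copies, such as null-like actions and unreachable states, so that $S$, $N_1$, $N_2$ are arbitrary.

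\textbf{The instance.} There are two intermediate states, \emph{good} and \emph{bad}, and the parameters are $\epsilon\to 0$ and $c$. The effort initial action has cost $c$ and reaches \emph{good} with probability $1$. The null initial action reaches \emph{bad} with probability $1$. At \emph{good}, both final actions yield success with probability $1$, and the null final action has cost $0$. At \emph{bad}, the null final action yields success with probability $1-\epsilon$, and a costly effort action yields success with probability $1$ at a cost exceeding $\epsilon r$. The reward is $r$ on success and $0$ on failure, with $r$ large relative to $c$.

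\textbf{Step 1: welfare-optimal profile.} Effort followed by null at \emph{good} has welfare $r-c$. Null followed by null at \emph{bad} has welfare $(1-\epsilon)r$. I would actually want effort to be welfare-optimal, which requires $c<\epsilon r$. So set $c=\epsilon r/2$, for instance, so that $\boldsymbol{a}^*$ is (effort, null, null).

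\textbf{Step 2: standard contract cost.} Under any standard contract $\boldsymbol{t}$, IC for effort versus null (with null play at \emph{bad}, or the agent's best response there, which only strengthens the deviation) requires
\[
t_{\mathrm{succ}} - c \ \ge\ (1-\epsilon)t_{\mathrm{succ}} + \epsilon t_{\mathrm{fail}}.
\]
Hence $t_{\mathrm{succ}}\ge c/\epsilon = r/2$, and the expected payment is at least $r/2$.

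\textbf{Step 3: pay-halfway contract cost.} Take $\boldsymbol{s}=(c,0)$, paying $c$ upon reaching \emph{good}, and $\boldsymbol{t}=0$. The agent is indifferent among final actions, and ties are broken in favor of the principal, so she plays null. She is also indifferent at the initial stage and, again by tie-breaking, chooses effort. The payment is $c=\epsilon r/2$. The payment ratio is therefore $(r/2)/(\epsilon r/2)=1/\epsilon\to\infty$.

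\textbf{Main obstacle.} The delicate point is making sure that the deviation in Step 2 cannot be blocked. Under a standard contract, the agent at \emph{bad} might prefer effort, but that deviation only increases her payoff from deviating, so the bound holds whichever final action she picks. I would also check carefully that padding with extra states and actions preserves both the optimality of $\boldsymbol{a}^*$ and the IC analysis. Making the padded actions strictly dominated (same distribution, higher cost) should suffice.
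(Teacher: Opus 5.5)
Your construction is correct, but it uses a different instance from the paper's.

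\textbf{Your route.} You work in a deterministic first stage with a costly initial action. Its null deviation still yields success with probability $1-\epsilon$, so the outcome is nearly uninformative about the initial action, while the intermediate state reveals it exactly. A standard contract must therefore pay $c/\epsilon$ on success, whereas the single state transfer $c$ at \emph{good} suffices. This gives ratio $1/\epsilon$.

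\textbf{The paper's route.} The paper uses three states with stochastic transitions and two costless initial actions, and it places the cost $c$ on the productive final action at $s_2$. The overpayment comes from the deviation $a_1$, which reaches the free-success state $s_1$ with higher probability $p>q$. Under a standard contract this forces $t_2\ge (1-q)c/(1-p)$. The pay-halfway contract pays $c$ on success plus $\frac{p-q}{1-q}c$ at $s_2$, so the ratio is $\frac{1-q}{(1+p-q)(1-p)}\to\infty$ as $p\to 1$.

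\textbf{Comparison.} Your route is simpler and isolates the initial-action information cleanly. The paper's shows that the saving also arises when the desired effort is at the second stage and the state transfer offsets an incentive rent created at a different state. More importantly for the paper, its instance is reused in Theorem~\ref{obs_seperation} to show that pay-halfway can beat terminate-halfway. Yours cannot serve that purpose: blocking \emph{bad} and paying $c$ on success implements your $\boldsymbol{a}^*$ at the same cost $c$.

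Your padding by unreachable states and strictly dominated action copies is sound. It also handles the arbitrary $S$, $N_1$, $N_2$ in the statement, which the paper leaves implicit.
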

\begin{proof}[Proof Sketch]
We construct an example with three intermediate states and two outcomes.  In one state, the final action yields a high reward at zero cost; in another, the high reward is achievable but requires a transfer to cover a cost $c$; in the third state, all final actions lead to zero reward. There are two initial actions: one initial action assigns a higher probability $p$ to the high-reward zero-cost state and the remaining probability to the zero-reward state, while the other assigns a lower probability $q<p$ to the high-reward zero-cost state and allocates the rest to the costly state. The welfare-optimal profile chooses the second initial action that involves visiting the costly state.

Under a standard contract, incentives are based only on final outcomes. To incentivize the optimal action profile, the payment for the high-reward outcome must cover the agent’s cost and compensate for incentive constraints arising from the difference in transition probabilities. 

In contrast, a pay-halfway contract  enables the principal to directly cover the agent’s cost at the critical intermediate state with a much smaller total payment, while still preserving incentive compatibility.

By carefully choosing the transition probabilities, the payment required under the standard contract grows unbounded, while the payment under the pay-halfway contract remains bounded. This results in the ratio of payments approaching infinity, demonstrating the claimed separation.
\end{proof}

\subsection{Results in Specialized Delegation Process}
We now analyze how the pay-halfway contract leverages the information embedded in intermediate states by comparing its performance to the standard contract across different specialized delegation processes.

We first show that in the stochastic first-stage process, the optimal pay-halfway contract yields the same profit as the optimal standard contract. 
\begin{claims}\label{theorem_median_r}
     Consider a stochastic first-stage  process  with $S$ intermediate states, $N_1$ initial actions, and $N_2$ final actions, the optimal  pay-halfway contract yields the same profit with the optimal standard contract, i.e.,
    \begin{equation}
    \M = \Single.
    \end{equation}
\end{claims}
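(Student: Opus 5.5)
The plan is to show two inequalities: $\Single \le \M$, which is immediate, and $\M \le \Single$, which is the real content. For the first, every standard contract $\boldsymbol{t}$ is the pay-halfway contract $(\boldsymbol{0},\boldsymbol{t})$. With $\boldsymbol{s}=\boldsymbol{0}$ the agent's backward induction and the principal's payment coincide with the standard case, so $\Single \le \M$.

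For the converse I will not use $N_1=1$ directly. Instead I will use the structural property that makes the first stage uninformative: every initial action $i\in[N_1]$ induces the same distribution over intermediate states, $F_{i,s}=F_s$ for all $i,s$. When $N_1=1$ this holds trivially. I take this as the reading under which the statement, with its $N_1$ initial actions, is meant. If the $\mathcal{F}_i$ differed across initial actions, the first stage would carry Type II information, and Example~\ref{example_median} already shows the equality can fail.

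Fix an arbitrary pay-halfway contract $(\boldsymbol{s},\boldsymbol{t})$ and compare it with $(\boldsymbol{0},\boldsymbol{t})$. I proceed in three steps.
\begin{itemize}
\item[(i)] The agent's final-stage problem at each state $s$, namely $\max_j \sum_m F^s_{j,m}t_m - c^s_j$, depends only on $\boldsymbol{t}$. Hence $j^*_s$ and $U^s$ are the same under both contracts, with ties broken identically.
\item[(ii)] The agent's utility from initial action $i$ is $\sum_s F_s(U^s+s_s) - c_i$. The first term does not depend on $i$, so the maximizing set is $\arg\min_i c_i$ regardless of $\boldsymbol{s}$. Intermediate transfers therefore cannot change the set of incentivized initial actions.
\item[(iii)] Within this tied set, the principal's profit depends on $i$ only through $F_{i,\cdot}$, which is common to all actions. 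So the principal-favorable tie-breaking yields the same state distribution and the same final actions under both contracts.
\end{itemize}
Consequently the expected reward $\sum_s F_s R^s_{j^*_s}$ is identical under the two contracts. The expected payment under $(\boldsymbol{s},\boldsymbol{t})$ exceeds that under $(\boldsymbol{0},\boldsymbol{t})$ by $\sum_s F_s s_s \ge 0$. Thus every pay-halfway contract is weakly dominated by a standard contract, which gives $\M \le \Single$.

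The main obstacle is the tie-breaking bookkeeping in steps (i) and (iii). I must check that removing $\boldsymbol{s}$ does not alter which profile the agent selects when several profiles tie. The separation into a final-stage problem that depends only on $\boldsymbol{t}$ and a first-stage problem that is $\boldsymbol{s}$-invariant handles this. I would write it out explicitly by comparing the two agents' argmax sets profile by profile: the set of utility-maximizing action profiles is identical under both contracts, so the principal-favorable selection from that set also coincides.
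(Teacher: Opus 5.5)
Your proof is correct and follows the same idea as the paper. The paper's one-line argument is that with a single initial action, intermediate transfers cannot affect incentives and so only add to the payment. You make this rigorous by comparing $(\boldsymbol{s},\boldsymbol{t})$ with $(\boldsymbol{0},\boldsymbol{t})$, checking tie-breaking explicitly, and mildly generalizing to initial actions that share the same state distribution.
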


In contrast, in the deterministic first-stage process, the pay-halfway contract can extract nearly the entire welfare and significantly outperform the standard outcome-based contract. 

In a deterministic first-stage  process, each initial action leads to only one intermediate state. When employing a pay-halfway contract, the agent can simply be incentivized to undertake costly initial actions by offering transfers to the corresponding intermediate state. However, with a standard contract, if we aim to incentivize costly initial actions, we can only raise payments on certain outcomes. Yet, this might inadvertently increase the expected utility for other states, consequently affecting other initial actions as well. 
Hence, we construct an instance where the pay-halfway contract can specifically incentivize optimal actions, while the standard contract must increase rewards for all initial actions, resulting in inefficient use of payments.

\begin{theorem}\label{theorem_median_exa}
    There is a deterministic first-stage process  with $N_1+1$ intermediate states, $N_1+1$ initial actions, and $N_2+1$ final actions, where the profit obtained by the optimal pay-halfway contract is more than $\Omega(N_1N_2)$ times the profit obtained by the optimal standard contract, i.e., 
\begin{equation}
    \M\ge \Omega(N_1N_2) \cdot \Single.
\end{equation}
\end{theorem}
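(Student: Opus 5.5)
We plan an explicit construction that embeds a known hard instance for outcome-based contracts into the two-stage structure. The idea is to place the welfare-optimal ``action'' alone in a dedicated intermediate state, where a pay-halfway transfer can target it. The remaining $N_1N_2$ (or so) actions of the hard instance are spread over the other states as (initial, final) action pairs.

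\textbf{Step 1 (the base gap instance).} Take a binary-outcome standard instance (outcomes \emph{fail}, with reward $0$, and \emph{success}, with reward $1$) with $n=\Theta(N_1N_2)$ actions. Action $k$ has success probability $p_k$ and cost $c_k$. The parameters are chosen as in the classical linear-contract lower bound: $p_k$ grows geometrically in a parameter $\gamma$, and costs are set so that the agent's best response to a success payment $t$ moves through all actions in turn. Each action's welfare increases only slightly, while the rent needed to induce action $k$ eats almost all of its welfare. With binary outcomes an optimal standard contract pays only on success, so it is a linear contract. The known analysis then gives $\max_k \text{profit}_k \le O(1/n)\cdot \max_k (p_k-c_k)$. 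The last action $n$ is welfare-optimal, and we arrange that $c_n>0$ while $p_n - c_n$ is essentially $\W$.

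\textbf{Step 2 (embedding with deterministic first stage).} Use initial actions $0,1,\dots,N_1$, where $0$ is the null action, and let initial action $i$ lead deterministically to state $i$. In state $N_1$ (the ``top'' state) put only the null final action and one final action of cost $0$ with success probability $p_n$, and give initial action $N_1$ the cost $C_{N_1}=c_n$. In each state $i<N_1$, set $C_i=0$ and put $N_2$ final actions realizing pairs $(i,j)$, each with success probability $p_{(i,j)}$ and cost $c_{(i,j)}$ copied from the other actions of the base instance. The null state $0$ has only zero-reward actions.

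Under a standard contract $\boldsymbol t=(0,t)$, backward induction makes the agent maximize $t\,p_{(i,j)}-C_i-c^i_j$ over pairs, because the first stage is deterministic. This is exactly the agent's problem in the base instance, so $\Single$ equals the base optimal profit, which is $O(\W/(N_1N_2))$. The counts match the statement: $N_1+1$ states, $N_1+1$ initial actions, and $N_2+1$ final actions.

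\textbf{Step 3 (pay-halfway nearly extracts welfare).} Use $\boldsymbol s$ with $s_{N_1}=c_n$, all other entries $0$, and $\boldsymbol t=0$. With $\boldsymbol t=0$ every state's continuation utility is $0$, because costs are nonnegative and null actions exist. So initial action $N_1$ gives utility $c_n-c_n=0$ and every other initial action gives at most $0$. Ties are broken in the principal's favor, so the agent picks initial action $N_1$ and then the costless high-probability final action, which the principal also prefers. Profit is $p_n-c_n\approx\W$, hence $\M\ge\Omega(N_1N_2)\cdot\Single$.

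The main obstacle is Step 1 together with its compatibility with Step 2. We must fix concrete parameters $(p_k,c_k)$ for which (a) the optimal standard contract earns only a $1/n$ fraction of welfare, and (b) the welfare-optimal action's welfare dominates so that full extraction of it alone is $\Theta(\W)$. We must also check that all these pairs can sit in separate states without letting the standard contract gain from the state structure; it cannot, because the standard contract sees only the outcome.

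I would first try the standard construction $p_k=\gamma^{k-n}$ with costs chosen so that the breakpoints are $\alpha_k=1-\gamma^{-k}$. Then I would verify directly that the profit at every breakpoint is $O(1)$ while $\W=\Theta(n)$, after rescaling rewards so that probabilities stay at most $1$.
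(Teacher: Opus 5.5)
Your proposal is correct and is essentially the paper's argument. You isolate the welfare-optimal action in its own state and move its cost onto the initial action, then pay exactly that cost at that state to extract the full welfare (the paper pays $c$ at $s_{N_1+1}$); a standard contract then faces a binary-outcome, hence linear, instance built from $\Theta(N_1N_2)$ exponentially spaced (initial, final) pairs whose breakpoint profits are all $O(1)$.

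Two minor points. First, your dedicated null state $0$ is unnecessary, since a zero-cost initial action followed by a null final action already serves as the null; it wastes $N_2$ slots, so your bound degenerates at $N_1=1$. Second, because your welfare increments start at the null action, you avoid an indexing slip in the paper: its costs $c^s_j=\sum_{k=1}^{sN_2+j}\lambda^k$ give the cheapest non-null action welfare $N_2+1$, so the first breakpoint already earns $N_2+1$ rather than $O(1)$, and re-indexing with $(s-1)N_2+j$ fixes this.
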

\begin{proof}[Proof Sketch]

Consider two possible outcomes: a zero-reward outcome \( r_1 = 0 \), and a high-reward outcome \( r_2 = r \), where \( r \) is chosen to be sufficiently large. There are \( N_1 + 1 \) initial actions, each deterministically leading to a distinct intermediate state. Among these, a special initial action \( a_{N_1+1} \) incurs a large cost
$
c = \sum_{k=1}^{(N_1+1)N_2 + 1} \lambda^k
$
and leads to the state \( s_{N_1+1} \). All other initial actions incur zero cost and lead to intermediate states \( s_1, \ldots, s_{N_1} \), respectively.

Each intermediate state has \( N_2 + 1 \) final actions. In particular, in state \( s_{N_1+1} \), all final actions incur no additional cost and yield an expected reward of
$
R = c + (N_1 + 1)N_2 + 1.
$
For each \( s \in [N_1] \) and final action \( a_j^s \) with \( j \in [N_2] \), the associated cost is
$
c_j^s = \sum_{k=1}^{sN_2 + j} \lambda^k,
$
and the probability of achieving the high-reward outcome is set such that the expected reward is
$
R_j^s = c_j^s + sN_2 + j.
$

A pay-halfway contract that pays \( c \) at state \( s_{N_1+1} \) can fully extract the total welfare, which is
$
\mathcal{M} = \mathcal{W} = R - c = (N_1 + 1)N_2 + 1.
$
In contrast, an optimal standard-contract mechanism can achieve at most the same profit as linear contracts. Due to the exponential nature of the cost structure, the maximum achievable profit under such contracts is bounded by \( O(1) \).

Therefore, the profit ratio satisfies: 
$
\frac{\mathcal{M}}{\Single} \ge \frac{(N_1 + 1)N_2 + 1}{O(1)} = \Omega(N_1 N_2).
$
\end{proof}

From Claim \ref{claim_special}, we observe that the profit under the optimal pay-halfway contract does not exceed $O(N_1N_2)$ times the profit achievable under the optimal standard contract, as the profit is bounded above by the welfare. Therefore, the performance of the pay-halfway contract actually matches the tight bound.

\begin{remark}
    In conclusion, we find that the pay-halfway contract primarily leverages intermediate-state information to infer the agent’s initial action choice and thereby incentivize desirable initial behavior. Consequently, this contract structure is particularly effective in deterministic first-stage process, where intermediate states directly reflect the agent’s initial actions, but is less suitable for stochastic first-stage process, where such information is absent. 
\end{remark}

\section{Terminate-halfway Contract}\label{sec_terminal}
In this section, we analyze the terminate-halfway contract. Similarly, we investigate whether and how this contract leverages information embedded in intermediate states. 

In Example \ref{exa_terminal}, the optimal standard contract $\boldsymbol{t}=(0,8)$ yields the expected profit of $1.2$ which is strictly less than the expected profit of the terminate-halfway contract we mentioned in the example. Thus, the terminate-halfway contract provides a higher expected profit compared to the optimal standard contract, demonstrating the effectiveness of incorporating intermediate state information into the contract design.

We further analyze the aforementioned question by comparing its performance to that of the standard outcome-based contract across various specialized delegation processes.

\subsection{Results in Specialized Delegation Process}
We present several tight counterexamples within specialized delegation processes to illustrate the effectiveness of the terminate-halfway contract. The results simultaneously demonstrate which types of information the terminate-halfway contract can utilize and the extent to which these information types can enhance its performance. We focus on high-level insights, with full technical details provided in the Appendix.

We start with a deterministic first-stage process to show the significant advantage of terminate-halfway contracts.
\begin{theorem}\label{theorem_terminal_d_exa}
    There is a deterministic first-stage process  with $N_1+1$ intermediate states, $N_1+1$ initial actions, and $N_2+1$ final actions, where the profit obtained by the terminate-halfway contract is more than $\Omega(N_1N_2)$ times the profit obtained by the optimal standard contract, i.e., 
\begin{equation}
    \T\ge \Omega(N_1N_2) \cdot \Single.
\end{equation}
\end{theorem}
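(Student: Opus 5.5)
I plan to reuse the construction from the proof of Theorem \ref{theorem_median_exa}, with one modification that lets termination do the job that the intermediate payment did there. As before, there are two outcomes with $r_1=0$ and $r_2=r$, where $r$ is large, and $N_1+1$ initial actions, each leading deterministically to its own intermediate state.

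\textbf{Construction.} The zero-cost initial actions $a_1,\dots,a_{N_1}$ lead to $s_1,\dots,s_{N_1}$. At these states the final actions have exponential costs $c^s_j=\sum_{k=1}^{sN_2+j}\lambda^k$ and expected rewards $R^s_j=c^s_j+sN_2+j$, exactly as before. The special action $a_{N_1+1}$ has cost $c$ and leads to $s_{N_1+1}$. The change is at $s_{N_1+1}$: instead of all final actions being costless with reward $R$, I make them deterministically produce the high-reward outcome, so that the reward there is $r$ at zero final cost. I choose $r=c+W$ with $W=(N_1+1)N_2+1$, so the welfare of this profile is $W$. All null actions have zero cost and zero reward, and I check that the welfare-optimal profile is $a_{N_1+1}$ followed by any final action.

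\textbf{Terminate-halfway lower bound.} I take the contract $\mathcal{S}=\{s_1,\dots,s_{N_1}\}$ with $t_2=c$. Every initial action except $a_{N_1+1}$ then gives the agent zero utility. Action $a_{N_1+1}$ gives utility $c-c=0$, and ties break in favor of the principal. The principal's profit is therefore $r-c=W$. The point of termination here is that it shuts off the outside options, so paying the agent's cost for $s_{N_1+1}$ no longer leaks rent into the other states. This is the same effect the intermediate payment achieved in Theorem \ref{theorem_median_exa}.

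\textbf{Standard-contract upper bound.} With only two outcomes, any standard contract is effectively linear: $t_1=0$ is optimal, and $t_2=\alpha r$. So it suffices to bound $\Linear$. I follow the breakpoint argument behind Theorem \ref{theorem_median_up} and the exponential-cost analysis of Theorem \ref{theorem_median_exa}.

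Fix $\alpha$. At each state $s\le N_1$, the agent chooses the final action maximizing $\alpha R^s_j-c^s_j$. Because consecutive costs grow by a factor of about $\lambda$ while welfare grows only by $1$, each step up the index requires $\alpha$ to be within $O(1/\lambda)$ of $1$. I will show that the agent's induced profile, over all $(s,j)$ and the special action, moves along a chain of $W$ options ordered by welfare $1,2,\dots,W$. At the breakpoint where option $k$ becomes optimal, the principal keeps $(1-\alpha)R$, and this is at most $O(1)$ once $\lambda$ is large, via the standard telescoping bound. The special action similarly needs $\alpha r\ge c+\max_s U^s$, and $U^s$ is roughly $\lambda^{N_1N_2+N_2}$, which forces $(1-\alpha)r=O(1)$. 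Hence $\Single=O(1)$, and the ratio is $\Omega(N_1N_2)$.

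\textbf{Main obstacle.} The hard step is the last one: verifying that the initial-action choice stays monotone along the exponential chain and that the special action is only induced at an $\alpha$ with $(1-\alpha)r=O(1)$. To handle it, I would set $c=\sum_{k=1}^{W}\lambda^k$ so the special action sits at the top of the same geometric ladder, and then apply the per-breakpoint profit estimate uniformly across the ladder.
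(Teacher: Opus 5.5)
\textbf{A genuine gap (shared with the paper's written instance).} Your plan is essentially the paper's proof. You reuse the same ladder instance, terminate every state except $s_{N_1+1}$, pay $c$ on the high outcome, and reduce standard contracts to linear ones with $t_1=0$ before bounding profit at each breakpoint. Your choice of deterministic success with $r=c+W$ at $s_{N_1+1}$ even repairs a slip in the paper: paying $c$ on an outcome reached only with probability $p<1$ leaves the agent with negative utility. However, the construction as you state it does not give $\Omega(N_1N_2)$, and neither does the paper's appendix, which uses the same indexing.

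The step ``at every breakpoint $(1-\alpha)R=O(1)$'' fails at the first breakpoint. With $s\in\{1,\dots,N_1\}$ and exponent $sN_2+j$, the rungs have welfare $N_2+1,\dots,(N_1+1)N_2$ and then $W$. They are not the chain $1,2,\dots,W$ you describe. The lowest rung $(a_1,a^1_1)$ has cost $C=\sum_{k=1}^{N_2+1}\lambda^k$ and welfare $N_2+1$, so the jump from the null option (welfare $0$) to it is not a unit step. Take $\alpha^\ast=C/(C+N_2+1)$. The agent gets utility $0$ from both the null option and this rung. Every higher rung gives negative utility, because $\alpha^\ast/(1-\alpha^\ast)=C/(N_2+1)<\lambda^{N_2+2}$ and the per-rung increments $\alpha-(1-\alpha)\lambda^k$ decrease in $k$. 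With ties broken for the principal, she earns $(1-\alpha^\ast)R^1_1=N_2+1$. Hence $\Single\ge N_2+1$ and $\T/\Single\le\frac{(N_1+1)N_2+1}{N_2+1}<N_1+1$, which is only $O(N_1)$.

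\textbf{The fix.} Re-index the exponents to $(s-1)N_2+j$: take $c^s_j=\sum_{k=1}^{(s-1)N_2+j}\lambda^k$ and $R^s_j=c^s_j+(s-1)N_2+j$, with $c=\sum_{k=1}^{N_1N_2+1}\lambda^k$ and $r=c+N_1N_2+1$. This keeps every probability at most $1$. Now the null option and the rungs form a consecutive ladder with welfare $0,1,\dots,N_1N_2+1$. The breakpoint into rung $k$ is $\alpha_k=\lambda^k/(1+\lambda^k)$, and $(1-\alpha_k)R_k=\bigl(\sum_{i\le k}\lambda^i+k\bigr)/(1+\lambda^k)<3$ for $\lambda\ge 2$, equal to $1$ at $k=1$. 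Meanwhile your termination contract still earns $N_1N_2+1$, which yields the theorem.

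Separately, your remark that $U^s$ is ``roughly $\lambda^{N_1N_2+N_2}$'' is wrong. Under a linear contract the agent's utility is at most $\alpha$ times the welfare of the chosen rung, which is small. What actually forces $(1-\alpha)r=O(1)$ for the special action is its IC constraint against the adjacent top rung, which gives $1-\alpha\le 1/(1+\lambda^{W})$.
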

\begin{proof}[High-Level Idea]  The terminate-halfway contract permits the delegation process to be terminated if certain unfavorable intermediate states are observed, resulting in an expected utility of zero for the agent. Consequently, this contract can potentially incentivize the agent to select specific initial actions, particularly in deterministic settings.  In the deterministic setting, the terminate-halfway contract can motivate the agent to choose a specific initial action by blocking all intermediate states except the one that results from the desired initial action. In contrast, this level of control is not achievable under a standard contract. Therefore, we construct an instance, similar to that in Theorem \ref{theorem_median_exa}, in which any standard contract can obtain at most a $\frac{1}{\Omega(N_1N_2)}$ fraction of the total welfare, whereas the terminate-halfway contract is able to extract the full welfare.
\end{proof}

Next, we consider a stochastic first-stage process to further illustrate the superior performance of terminate-halfway contracts.
\begin{theorem}\label{theorem_terminal_r_example}
    There is a stochastic first-stage  process  with $2S+1$ intermediate states, and $N_2+1$ final actions, where the profit obtained by the terminate-halfway contract is more than $O(S N_2)$ times the profit obtained by the optimal standard contract, i.e., 
\begin{equation}
\T\ge \Omega(S 
    N_2) \cdot \Single.
\end{equation}
\end{theorem}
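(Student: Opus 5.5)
The plan is to build an explicit stochastic first-stage instance, using the same exponential-cost device as in Theorem \ref{theorem_median_exa}. The single initial action is free and lands in each of the $2S+1$ intermediate states with some fixed probability. The states split into three kinds. There are $S$ \emph{good} states $g_1,\dots,g_S$. There are $S$ \emph{bad} states $b_1,\dots,b_S$. Finally there is one auxiliary state $s_0$, which is used to fix normalizations.

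There are two outcomes, with $r_1=0$ and $r_2=r$ large. In each good state $g_s$, the $N_2$ costly final actions $a^s_j$ are given nested exponential costs $c^s_j=\sum_{k=1}^{sN_2+j}\lambda^k$. Their success probabilities are chosen so that the welfare increment from $a^s_{j-1}$ to $a^s_j$ is about $1$. This mirrors the construction in Theorem \ref{theorem_median_exa}, but the states are now reached stochastically instead of through distinct initial actions. The terminate-halfway contract will only let the agent operate in the good states, so the welfare available there is $\Theta(SN_2)$.

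The bad states are the key device. In each bad state $b_s$, some zero-cost (null) final action succeeds with high probability, scaled suitably with $\lambda$. Under any outcome payment $t_2$, the agent therefore collects a large rent $\approx t_2\cdot\Pr[\text{success}\mid b_s]$ there without producing much net welfare for the principal beyond what she would get anyway. If the probabilities and rewards are set right, these bad states make every standard contract pay a large amount on states where the payment buys no incentive.

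The proof then has two parts. \textbf{(a) Lower bound on $\T$.} Take $\mathcal{S}=\{b_1,\dots,b_S,s_0\}$, and set the transfer $t_2$ just above the last breakpoint, so that the welfare-maximizing final action is chosen in every good state. With exponential costs and unit welfare gaps, the agent's rent at that breakpoint is $O(1)$ relative to the welfare. This gives profit $\Omega(SN_2)$ times the normalizing constant. A more careful option is to compute, as in the example, the profit at $t_2=\alpha r$ with the bad states blocked.

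\textbf{(b) Upper bound on $\Single$.} With two outcomes, every standard contract is linear. So I will parametrize by $\alpha=t_2/r$ and evaluate the principal's profit on each interval between consecutive breakpoints, of which there are $SN_2$ by the argument in Theorem \ref{theorem_median_up}. On each interval, the profit is the sum of three pieces: the good-state welfare increments realized so far, minus the rent paid in the good states, minus the rent paid in the bad states. The exponential growth in $\lambda$ forces each good-state breakpoint $\alpha$ to rise geometrically. The bad-state rent $\alpha r\cdot\Pr[\text{bad, success}]$ then cancels all but $O(1)$ of the accumulated welfare. This yields $\Single=O(1)$, and hence the ratio $\Omega(SN_2)$.

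The main obstacle is tuning the bad-state parameters in step (b). The bad states must push standard-contract profit down at every breakpoint simultaneously. At the same time, they must not add welfare that the standard contract could capture at $\alpha=0$, where the agent takes null actions for free and the principal keeps all of that reward. My first attempt would be to give the bad states a small arrival probability together with a high null success rate. Then their contribution at $\alpha=0$ is $O(1)$, while the rent they extract, $\alpha r\cdot\Pr$, grows linearly with $\alpha$ and dominates the geometric breakpoint structure. I would then verify the telescoping sum interval by interval.
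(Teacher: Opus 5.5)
\textbf{There is a genuine gap, and it is structural rather than a matter of tuning.} With only the two outcomes $r_1=0$, $r_2=r$ and a single initial action, no choice of parameters separates $\T$ from $\Single$.

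Take any terminate-halfway contract $(\boldsymbol{t},\mathcal{S})$ with $d=t_2-t_1\in[0,r]$. In every unblocked state the agent's final action depends only on $d$, and the principal earns $p_s(r-d)-t_1\le p_s(r-d)$ there, where $p_s$ is the success probability of that action. But $p_s(r-d)$ is exactly what she earns in that state under the standard contract $(0,d)$, and it is nonnegative in \emph{every} state. So $(0,d)$ with nothing blocked does at least as well. The cases $d>r$ and $d<0$ are dominated by $(0,0)$. Put differently, a linear contract never makes any state unprofitable, so blocking can only discard profit.

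Your bad states illustrate this. There the principal nets $(1-\alpha)r\Pr[\text{success}]\ge 0$: the rent $\alpha r\Pr[\text{success}]$ is paid out of reward earned in that same state. It therefore cannot ``cancel'' good-state welfare as step (b) claims.

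Step (a) also fails on its own. Moving $\alpha$ past the breakpoint of cost index $k$ forces $1-\alpha=1/(\lambda^k+1)$. The principal then keeps only $(1-\alpha)R=O(1)$, and the agent's rent is nearly the whole welfare---the opposite of what you assert. That is exactly why the exponential device caps linear contracts, and by the argument above it caps your terminate-halfway contract in the same way.

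\textbf{The missing idea} is to let the terminate-halfway contract pay on outcomes \emph{other} than the high-reward one. Then some states become unprofitable under a standard contract, and blocking them has value. The paper uses $S+3$ outcomes.
\begin{itemize}
\item In good state $s\in[S]$, only the welfare-maximizing action $N_2$ sends probability $\epsilon$ to a zero-reward signal outcome $s+2$. The other costly actions send that mass to outcome $S+3$.
\item Each of the $S+1$ remaining states leads deterministically to one of the outcomes $3,\dots,S+3$.
\end{itemize}
Paying $t_{s+2}=c^s_{N_2}/\epsilon$ and blocking those $S+1$ states incentivizes $N_2$ at zero rent. This extracts the full welfare $(s+1)N_2$ in each good state, giving $\T=\Theta(SN_2)$.

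A standard contract cannot use the signals, because the unblocked bad states would trigger them with probability $1/(2S+1)$ each. This forces $t_m<S(2S+1)N_2$ for $m\ge 3$, which is negligible once $\epsilon$ is small. The contract is then effectively linear on outcomes $1,2$, and the exponential costs bound its profit by $O(1)$.
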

\begin{proof}[High-Level Idea] 
The effectiveness of a terminate-halfway contract stems from the principal's ability to end the delegation process when adverse intermediate states are observed. This allows the principal to "block" states that would result in negative profits, thereby maximizing overall profit.
Consider a scenario with one positive reward outcome and zero-reward outcomes achievable only through optimal final actions in certain states. Incentivizing these actions via the positive outcome can lead to wasteful payments. Furthermore, some states lead exclusively to zero-reward outcomes regardless of the final action. With a terminate-halfway contract, the principal can incentivize the optimal final actions through transfers conditioned on zero-reward outcomes, while blocking states that might result in negative profits. In contrast, a standard contract would either require excessive payments to incentivize desirable actions through the transfer on the high-reward outcome or result in negative profits from undesirable states.
Therefore, the terminate-halfway contract enables the principal to extract nearly the full welfare by strategically managing incentives and "blocking" negative outcomes. Conversely, a standard contract forces the principal to choose between incurring wasteful payments or facing negative profits from adverse states.
\end{proof}

\begin{remark}
    In summary, the strength of the terminate-halfway contract lies in its ability to reduce excessive payments by allowing the principal to terminate the delegation process upon observing unfavorable signals (i.e., undesirable intermediate states). It can also incentivize the agent to avoid poor initial actions.  This contract is particularly effective in the deterministic first-stage process, where intermediate states reveal the agent’s initial action, thereby enabling the principal to encourage desirable initial action. Additionally, it performs well in stochastic first-stage process, where intermediate states may indicate the potential quality of outcomes resulting from final actions, allowing the principal to preemptively terminate low-value trajectories. Therefore, the terminate-halfway contract effectively leverages both types of intermediate-state information: the intrinsic quality of the state and the agent’s initial action.
\end{remark}

\section{Relationship Between Pay-halfway and Terminate-halfway Contract}\label{sec_separation}
In this section, we examine the relationship between the two newly proposed contract types.

Building on previous findings, we observe that the pay-halfway contract primarily leverages intermediate-state information to infer the agent’s initial action and to incentivize desirable initial behavior. Consequently, it tends to perform relatively poorly in the stochastic first-stage process, where intermediate states are less informative about the agent's initial choice. In contrast, the terminate-halfway contract appears to make more comprehensive use of intermediate-state information: it not only discourages undesirable initial actions but also allows the principal to avoid low-quality intermediate states that are likely to lead to poor outcomes. These observations naturally lead to the question: Does the terminate-halfway contract exhibit strictly greater effectiveness than the pay-halfway contract? More specifically, does it consistently outperform the pay-halfway contract across all settings?

\begin{theorem}\label{obs_seperation}
 There exists an instance where the pay-halfway contract outperforms the terminate-halfway contract.
\end{theorem}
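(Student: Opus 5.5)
The plan is to exhibit a small explicit instance and compute $\M$ and $\T$ exactly. The instance should make the state that the terminate-halfway contract would like to block still valuable to the principal. Termination then discards reward, while the pay-halfway contract can reward the good state directly without sacrificing anything.

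\textbf{Construction.} I would use two outcomes: a failure with reward $0$ and a success with reward $r=10$. There are two intermediate states $A$ and $B$, each with a single (null) final action.
\begin{itemize}
\item From $A$, success occurs with probability $1$.
\item From $B$, success occurs with probability $q=\tfrac12$.
\end{itemize}
There are two initial actions:
\begin{itemize}
\item the null action, with cost $0$, which leads to $B$ with probability $1$;
\item an effort action, with cost $c=1$, which leads to $A$ or $B$ with probability $p=\tfrac12$ each.
\end{itemize}
Effort raises the success probability from $\tfrac12$ to $\tfrac34$. Hence the expected reward under effort is $7.5$, against $5$ under null. Since there are no final choices, the agent's only decision is the initial action, and both contract classes reduce to finitely many cases.

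\textbf{Pay-halfway.} The contract $\boldsymbol{s}=(s_A,s_B)=(2,0)$, $\boldsymbol{t}=\boldsymbol{0}$ gives effort expected utility $p\cdot 2-c=0$ and null utility $0$. Ties are broken in the principal's favor, so the agent exerts effort. The expected payment is $p\cdot 2=1=c$, so $\M\ge 7.5-1=6.5$.

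\textbf{Terminate-halfway.} I would enumerate the four possible sets $\mathcal{S}$.
\begin{enumerate}
\item $\mathcal{S}=\{A,B\}$ gives profit $0$.
\item $\mathcal{S}=\{A\}$ can only be worse than not terminating. With no transfers the agent plays null and the profit is at most $5$.
\item $\mathcal{S}=\emptyset$ is the standard contract. Null yields at most $5$. Inducing effort requires $\tfrac34 t-1\ge \tfrac12 t$, i.e.\ $t\ge 4$. The payment is then $\tfrac34\cdot 4=3$, giving profit $4.5$.
\item $\mathcal{S}=\{B\}$ makes null worth $0$ to both parties. Inducing effort requires $\tfrac12 t\ge 1$, i.e.\ $t\ge 2$, with payment $1$. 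The profit is $\tfrac12\cdot 10-1=4$.
\end{enumerate}
Thus $\T=5<6.5\le\M$, which proves the statement.

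\textbf{Main step and intuition.} The main step is making the case analysis for the terminate-halfway contract airtight. For each fixed $\mathcal{S}$ I must optimize over all $\boldsymbol{t}\ge 0$ under each induced initial action. This is a one-dimensional IC constraint here because only the success transfer matters, so I expect it to be routine.

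The conceptual point is what the comparison shows. Termination is an all-or-nothing signal: blocking $B$ to punish the null action also forfeits the $q r$ reward that effort generates through $B$. The pay-halfway contract, by contrast, targets the informative state $A$ (which null never reaches) at exactly cost $c$, with no loss of reward and no information rent. I would also remark that the gap can be made large by letting $q\to 1$ while keeping $(1-q)r$ well above $c/p$.
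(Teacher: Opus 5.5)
Your proof is correct, but it uses a different and simpler instance than the paper. The paper reuses the three-state construction of Theorem~\ref{theorem_median_cost}: two zero-cost initial actions and a costly final action at $s_2$. It restricts attention to contracts inducing the welfare-optimal profile $(a_2,a^s_1)$. It then compares the pay-halfway profit $x-(1+p-q)c$ with a standard contract, or with blocking $s_1$, which yields $(1-q)(x-c)$. Separation holds for $x>\frac{p}{q}c$, on top of the standing assumption $x>\frac{1+p-q}{1-p}c$.

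Your two-state instance, with a single null final action per state, removes second-stage moral hazard entirely. Each IC constraint becomes one-dimensional, so you can enumerate all four termination sets and all transfers and get exact values $\T=5<6.5=\M=\W$. This is more complete than the paper's case analysis, which does not explicitly handle other blocked sets or terminate-halfway contracts that induce $a_1$.

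The underlying mechanism is the same in both. The state that termination must block to deter the deviation is also visited by the desired action and carries reward. The pay-halfway contract instead pays at a state that only the desired action reaches. The paper's route ties the separation to its payment-reduction instance. Yours shows that no final-stage incentive problem is needed, and that the pay-halfway contract can even extract the full welfare.

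Two small corrections. For $\mathcal{S}=\{A\}$, the actual reason is that for every $\boldsymbol{t}$, effort yields $\tfrac12 U^B-1<U^B$. So null is always chosen and the profit is $5-U^B\le 5$; ``with no transfers the agent plays null'' does not by itself cover all contracts. Your closing remark is also misleading. In your family $\M\le r$ and $\T\ge qr$, so $\M/\T\le 1/q\to 1$ as $q\to 1$; in fact $\M/\T<2$ for all parameters. Only the additive gap can be made large, and that is scale-dependent. The remark is not needed for the statement, so drop it or restate it for the additive gap.
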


\begin{proof}[Proof Sketch]
    We follow the same instance construction as in Theorem~\ref{theorem_median_cost}. The key idea is to compare the maximal achievable profit under the pay-halfway contract and the terminate-halfway contract for the optimal action profile.

For the terminate-halfway contract, there are two cases to consider: either some state is blocked, or no state is blocked. If no state is blocked, the contract effectively reduces to a standard contract, which is strictly suboptimal compared to the pay-halfway contract. Conversely, if a state is blocked, the payment necessary to incentivize the optimal action profile yields a strictly lower maximal profit than that of the pay-halfway contract, as the profit in the blocked state remains positive.

By comparing the profit expressions for both contracts, we show that the pay-halfway contract yields strictly higher profit under appropriate parameter settings. This demonstrates the superiority of the pay-halfway contract over the terminate-halfway contract in this setting.
\end{proof}
\vspace{-0.1 in}
This Theorem answers our question by showing that the pay-halfway contract can outperform the terminate-halfway contract in certain cases.  One possible explanation is that, although the terminate-halfway contract leverages richer information—both the agent’s initial action and the intrinsic quality of the intermediate state—while the pay-halfway contract mainly relies on the former, the pay-halfway contract may be more effective at incentivizing desirable initial actions.

\section{Conclusion}\label{sec_discussion}
This paper investigates the role and the impact of information during the delegation process by introducing a novel two-stage contract model with multiple intermediate states. 
To leverage the information revealed during the process, we propose two conceptually distinct contract types: the pay-halfway contract, which allows transfers based on both final outcomes and selected intermediate states; and the terminate-halfway contract, which permits early termination of the delegation process at certain intermediate states.
We conduct a comprehensive analysis of these contracts by comparing their performance to that of the optimal standard contract. This comparison highlights the relative efficiency and effectiveness of the newly proposed contracts that incorporate intermediate information, revealing their respective strengths and limitations. It thereby offers deeper insights into how each contract type exploits information and the practical implications for contract design.

Through this investigation, our work advances the theory of contract design by illustrating how intermediate information can be leveraged to improve performance in delegation settings.

\bibliographystyle{ACM-Reference-Format} 
\bibliography{ref}

\onecolumn
\newpage
\appendix
\section{Notation Table}
In this section, we summarize the notation introduced in Section~\ref{sec_setup} to facilitate the reader’s understanding of the model and subsequent analysis.
\begin{table}[ht]
\centering
\caption{Summary of Notation}
\label{tab:notation}
\begin{tabular}{ll}
\toprule
\textbf{Symbol} & \textbf{Meaning} \\
\midrule
$M$ & Number of possible final outcomes \\
$S$ & Number of intermediate states \\
$N_1$ & Number of initial actions available to the agent \\
$N_2$ & Number of final actions available at each intermediate state \\[0.5ex]

$c_i$ & Cost of initial action $i$ \\
$c^s_j$ & Cost of final action $j$ at intermediate state $s$ \\
$\mathcal{F}_i$ & Distribution over intermediate states induced by initial action $i$ \\
$F_{i,s}$ & Probability of reaching intermediate state $s$ after action $i$ \\
$\mathcal{F}^s_j$ & Outcome distribution induced by final action $j$ at state $s$ \\
$F^s_{j,m}$ & Probability of outcome $m$ under action $j$ at state $s$ \\[0.5ex]

$r_m$ & Principal's reward if outcome $m$ occurs \\
$R^s_j$ & Expected reward of final action $j$ at state $s$ \\
$\boldsymbol{a}=(i,j_1,\dots,j_S)$ & Action profile of the agent \\
$R_{\boldsymbol{a}}$ & Expected reward under action profile $\boldsymbol{a}$ \\
$c_{\boldsymbol{a}}$ & Expected cost under action profile $\boldsymbol{a}$ \\
$\mathcal{W}$ & Maximum expected welfare \\[0.5ex]

$\boldsymbol{t}=(t_1,\dots,t_M)$ & Transfers contingent on final outcomes \\
$\alpha$ & Fraction parameter in linear contracts \\
$\boldsymbol{s}=(s_1,\dots,s_S)$ & Transfers contingent on intermediate states \\
$\mathcal{S}$ & Set of intermediate states triggering termination \\[0.5ex]

$\Single$ & Expected profit under optimal standard contract \\
$\Linear$ & Expected profit under optimal linear contract \\
$\M$ & Expected profit under optimal pay-halfway contract \\
$\T$ & Expected profit under optimal terminate-halfway contract \\
\bottomrule
\end{tabular}
\end{table}

\section{Standard Contract Instance} \label{sec_standard}
In this section, we provide a brief overview of the classical model of contract design, referred to as the standard contract instance.

\textbf{Model.} A contract instance constitutes a principal and an agent. The agent has $N$ actions that can result in $M$ outcomes.  Action $i$ incurs a cost denoted as $c_i$ and induces a distribution $\mathcal{F}_i$ over outcomes, where $F_{i,m}$ represents the probability of outcome $m$ resulting from action $i$.  It is commonly assumed that there exists a null action with zero cost, i.e., $c_i=0$ for the corresponding null action.
The principal can get a reward $r_m$ if the outcome $m$ is realized. 
The principal can solely observe the realized outcome, but not the action undertaken.  Let $R_i$ denote the expected reward of action $i$, such that $R_i=\mathbb{E}_{m\sim\mathcal{F}_i}[r_m]$.  The expected welfare of action $i$ is the difference between the expected reward and the cost of taking action $i$, i.e., $R_i-c_i$. 

\textbf{Contract.} The principal's objective is to incentivize the agent to take actions that are advantageous to the principal. To achieve this, the principal devises a contract for the agent, wherein the principal commits to providing the agent with certain payments contingent upon the realized outcomes. Specifically, a contract $\boldsymbol{t}$ is represented as an $M$-dimensional vector, where the non-negative entry $t_m$ for outcome $m$ signifies the transfer to be granted to the agent upon the realization of that outcome. 

\textbf{Agent's Behavior.} Given a fixed contract $\boldsymbol{t}$, the agent will take the action that satisfies two conditions: (i) incentive compatibility (IC), meaning it maximizes the agent's expected utility among all actions, and (ii) individual rationality (IR), meaning it ensures the agent's expected utility is non-negative. Given that the contract assigns non-negative transfers and there includes a null-action, individual rationality is trivially satisfied. Consequently, the agent will choose an action $i^*(\boldsymbol{t})$ such that:
\begin{align*}
    i^*(\boldsymbol{t})\in \argmax_i \sum_{m=1}^M F_{i,m}t_m-c_i.
\end{align*}
Moreover, we adopt the common assumption, when there are ties, the agent will take actions in favor of the principal.  If agent will take action $i$ given contract $\boldsymbol{t}$, i.e., $i=i^*(\boldsymbol{t})$, we say the contract $\boldsymbol{t}$ incentivizes action $i$.

\textbf{Optimal Contract.} The principal's expected profit is defined as difference of the expected reward and the expected payment, i.e., 
\begin{align*}
    P(\boldsymbol{t})=\sum_{m=1}^MF_{i^*(\boldsymbol{t}),m}(r_m-t_m).
\end{align*}
The optimal contract can be determined using linear programming. Specifically, for each action $i$, we solve a linear program to find the contract that incentivizes action $i$ while minimizing the principal’s expected payment. Among all such contracts, the optimal contract is the one that yields the highest expected profit to the principal.

For each action $i$, the linear program (LP) aimed at incentivizing it features $M$ non-negative transfer variables $\{t_m\}$ and $N-1$ constraints to ensure incentive compatibility, as follows:
\begin{align*}
    \min \sum_{m=1}^M &F_{i,m}t_m,
    \\
    s.t. \sum_{m=1}^M F_{i,m}t_m-c_i&\ge \sum_{m=1}^M F_{i',m}t_m-c_{i'}, \quad \forall i'\ne i, i'\in [N],
    \\
    t_m&\ge 0 \quad \forall m\in[M].
\end{align*}

\section{Complexity Results}\label{app:complexity}
\begin{theorem}
For a two-stage delegation process with $S$ states, $N_1$ initial actions, and $N_2$ final actions, computing the optimal standard contract is APX-hard.
\end{theorem}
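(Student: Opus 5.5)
We will prove APX-hardness by an approximation-preserving (gap) reduction from a known APX-hard problem, using the fact that once a standard contract $\boldsymbol{t}$ is fixed, the agent's backward induction decomposes into $S$ independent ``final-stage'' problems plus one initial-stage problem whose ``outcomes'' are the intermediate states. The most natural source is the APX-hardness of computing optimal contracts in combinatorial or succinctly represented single-stage settings. An alternative source with the right max-over-subsets structure is a bounded-degree covering problem such as MAX-3SAT-B or vertex cover on bounded-degree graphs. We propose the latter because the two-stage structure lets us encode a ``selection'' through the initial action and a ``consistency check'' through final actions at each state.

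\textbf{Encoding.} Given a MAX-3SAT(B) instance with variables $x_1,\dots,x_n$ and clauses $C_1,\dots,C_k$, the outcomes will be one rewarded ``success'' outcome together with a pair of literal outcomes $m_{x}^{T}, m_{x}^{F}$ for each variable, carrying zero reward. A standard contract then amounts to a vector of transfers on these literal outcomes, which we want to play the role of a truth assignment.

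\textbf{Variable gadgets.} For each variable we create an intermediate state whose two final actions lead to $m_x^T$ or $m_x^F$. The costs are chosen so that only near-binary transfers, with one of $t_{m_x^T}, t_{m_x^F}$ high and the other low, are profitable; this rounds any contract to an assignment.

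\textbf{Clause gadgets.} For each clause we create an intermediate state reached with small probability under the desired initial action. Its final actions produce success with extra probability only if the transfer on one of the clause's literal outcomes is high enough to cover the action's cost.

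\textbf{Initial stage.} The desired costly initial action spreads probability mass uniformly over all gadget states, and a null initial action is also available. The principal's profit then becomes, up to additive constants, an affine function of the number of satisfied clauses. Bounded occurrence $B$ keeps the optimum $\Theta(k)$, so a constant-factor gap transfers.

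\textbf{Proof order.} We would carry out the argument in four steps.
1. Fix the gadget parameters.
2. \emph{Completeness}: from a satisfying assignment of $(1-\epsilon)k$ clauses, exhibit a contract whose profit is at least $P_{\text{yes}}$.
3. \emph{Soundness}: show that any contract with profit above $P_{\text{no}}$ can be rounded to an assignment satisfying more than $(7/8+\epsilon)k$ clauses.
4. Conclude that $P_{\text{yes}}/P_{\text{no}}$ is bounded away from $1$, giving APX-hardness.

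\textbf{Main obstacle.} We expect soundness to be the hardest step. Standard contracts may use fractional or ``both high'' transfers on a variable's literal outcomes. Such contracts could satisfy a clause through both literals while the agent's IC constraint at the initial stage still holds. We would first try to penalize ``both high'' by making the variable-gadget states heavily weighted: there, a high transfer on both outcomes is paid out with large probability, so the payment cost outweighs any clause gain. We would then argue per variable that moving to the better of the two pure choices does not decrease profit. The tie-breaking rule (in favor of the principal) and the interaction of initial-action IC with final-state utilities $U^s$ must be controlled. We plan to make the null initial action's utility dominated via a large cost gap, so that IC at the initial stage is slack in all relevant contracts.
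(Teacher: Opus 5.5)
Your reduction has a genuine gap at exactly the step you flag, and the fix you propose fails for the gadget you describe.

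\textbf{The variable gadget cannot penalize ``both high.''} In your variable gadget each final action leads to a single literal outcome, so the agent collects only one of $t_{m_x^T}, t_{m_x^F}$. Her utility there is $\max\{0,\ t_{m_x^T}-c_T,\ t_{m_x^F}-c_F\}$. The principal's payment (utility plus the chosen action's cost) is therefore essentially the larger transfer, whether one or both literal transfers are high. Weighting the variable states heavily penalizes the intended ``one high'' pattern exactly as much as ``both high.'' Any weight that makes both-high contracts unprofitable also wipes out your completeness profit. You would need a supermodular gadget, for instance an action splitting mass over both literal outcomes, with cost tuned so it is chosen only when both transfers are high.

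\textbf{Other steps are left open.} Even with such a gadget, you would still have to specify the mandatory zero-cost null final action at each state. You would also have to prevent the principal from bypassing the literal encoding by paying on the success outcome, and round fractional transfers. None of this is done. The costly initial action only adds trouble. Its IC constraint cannot be made slack by a ``large cost gap'': a larger cost tightens it, and slackness means leaving rents to the agent. Nothing in the model forces you to include it.

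\textbf{The missing idea, which is the paper's entire proof, is that no gadgets are needed.} Take $N_1=1$: a stochastic first-stage process whose only initial action is the zero-cost null action. Then:
\begin{itemize}
\item the intermediate state is the agent's private type, and $F_{1,s}$ is the prior;
\item the final actions at $s$ are type $s$'s actions;
\item a standard contract $\boldsymbol{t}$ is exactly a single (non-menu) contract in the Bayesian principal--agent model, since the agent best-responds state by state and the principal's profit is the prior-weighted average.
\end{itemize}
This embedding preserves profits exactly. So the known APX-hardness of computing the optimal single contract for a Bayesian agent (Guruganesh et al.\ 2021; Castiglioni et al.\ 2021) transfers immediately. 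Your own remark that backward induction decomposes into independent per-state problems is precisely this structure. You only needed to recognize it as the Bayesian-types model rather than build a MAX-3SAT$(B)$ reduction from scratch.
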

\begin{proof}
It is evident that the stochastic first-stage  process represents merely a specific case within our broader framework. The Bayesian agent-type model can, in fact, be viewed as an instance of this stochastic first-stage  process. Building on the findings from \cite{guruganesh2021contracts,castiglioni2021bayesian}, we can straightforwardly derive our conclusion from this observation.

\end{proof}

\begin{theorem}
    There exists an algorithm with time complexity of $ploy(S,N_1,N_2^S,M)$ to determine the optimal standard contract.
\end{theorem}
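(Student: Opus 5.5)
The natural route is the standard "enumerate target profiles, solve an LP for each" method for contract design. The only twist is that the agent now chooses a whole action profile $\boldsymbol{a}=(i,j_1,\dots,j_S)$ instead of a single action. Under a standard contract $\boldsymbol{t}$, the agent's utility from profile $\boldsymbol{a}$ is
\[
u_{\boldsymbol{a}}(\boldsymbol{t})=\sum_{s=1}^S F_{i,s}\sum_{m=1}^M F^s_{j_s,m}t_m - c_{\boldsymbol{a}}.
\]
This is linear in $\boldsymbol{t}$. The backward-induction choice of the agent, with ties broken in the principal's favor, is exactly a maximizer of $u_{\boldsymbol{a}}(\boldsymbol{t})$ over all profiles, up to behavior at unreached states. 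Those states do not affect either party's payoff, so they can be ignored. The whole problem therefore reduces to a standard contract instance with $N_1N_2^S$ "meta-actions" and $M$ outcomes. The meta-action $\boldsymbol{a}$ has outcome distribution $F_{\boldsymbol{a},m}=\sum_s F_{i,s}F^s_{j_s,m}$ and cost $c_{\boldsymbol{a}}$.

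The steps are as follows.

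\textbf{Step 1.} Verify that the reduction is exact. For every $\boldsymbol{t}$, the profile selected by backward induction maximizes $u_{\boldsymbol{a}}(\boldsymbol{t})$. Conversely, any maximizing profile gives every reached state a maximizing final action. Since the agent breaks ties in the principal's favor, the principal's profit in both formulations equals $\max$ over utility-maximizing profiles of $R_{\boldsymbol{a}}-\sum_m F_{\boldsymbol{a},m}t_m$.

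\textbf{Step 2.} For each target profile $\boldsymbol{a}$, solve the payment-minimizing LP from Appendix~\ref{sec_standard}: minimize $\sum_m F_{\boldsymbol{a},m}t_m$ subject to $u_{\boldsymbol{a}}(\boldsymbol{t})\ge u_{\boldsymbol{a}'}(\boldsymbol{t})$ for all $\boldsymbol{a}'$, with $\boldsymbol{t}\ge 0$. If the LP is feasible, record $R_{\boldsymbol{a}}$ minus its optimal value.

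\textbf{Step 3.} Output the best recorded value together with its $\boldsymbol{t}$.

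\textbf{Runtime.} Computing each $F_{\boldsymbol{a},m}$, $c_{\boldsymbol{a}}$ and $R_{\boldsymbol{a}}$ costs $\mathrm{poly}(S,M)$. Each LP has $M$ variables and at most $N_1N_2^S$ constraints, so it is solvable in time polynomial in $(N_1N_2^S, M, S)$. There are $N_1N_2^S$ LPs, so the total is $\mathrm{poly}(S,N_1,N_2^S,M)$.

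\textbf{Main obstacle.} The step needing the most care is Step~1, in particular the tie-breaking and the states that are never reached. If $F_{i,s}=0$, then $j_s$ is irrelevant to both payoffs, so profiles that differ only there give identical LP constraints and objectives. Duplicates are harmless because they only repeat LPs; they do not change the maximum. A second point is whether one should instead impose IC state by state, i.e.\ require $j_s$ optimal at each reached $s$ and $i$ optimal given the continuation values $U^s$. I would show this is equivalent to global optimality of $\boldsymbol{a}$: any global deviation improves only if it improves at some reached state or at the first stage. So the global constraint set can be used directly, and it stays within the stated complexity.
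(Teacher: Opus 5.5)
Your proposal is correct and follows the same route as the paper: enumerate all $N_1N_2^S$ action profiles, solve a payment-minimizing LP over the $M$ transfers for each, and keep the best. The only difference is how each LP is written. The paper uses the sequential IC constraints (per-state final-action constraints plus first-stage constraints on continuation values, $N_1-1+(N_2-1)S+M$ in total), so $N_2^S$ enters only through the number of LPs; your flattened meta-action LP instead uses all $N_1N_2^S$ global IC constraints and therefore needs your Step~1 equivalence argument, and both versions fit within $\mathrm{poly}(S,N_1,N_2^S,M)$.
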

\begin{proof}
To determine the optimal standard contract, we propose a linear programming approach. The linear program consists of  $M$ variables that need to be solved.  These variables represent the transfer with corresponding  outcomes.  Additionally, there are at most $N_1N_2^S$  action profiles to incentivize.  Each action profile represents a specific combination of actions that the agents can take.

For any given action profile, we can compute the optimal contract with minimal payment to incentivize it using a linear program.

The linear program involves $M$ variables and $N_1-1+(N_2-1)S+M$ constraints. The constraints ensure that the contract is feasible and meets the incentive requirements for the agents.  Consequently, it can be solved in polynomial time, specifically $ploy(S,N_1,N_2,M)$. Since there are $N_1N_2^S$ possible action profiles, and we need to compute the optimal contract for each profile, the overall algorithm operates in polynomial time relative to the number of action profiles and the number of variables. Specifically, the algorithm runs in $poly(S,N_1,N_2^S,M)$ time.

\end{proof}

\section{Upper Bound Results}
\subsection{Proof of Theorem \ref{theorem_median_up}}
\begin{proof}
Considering the inequality that $\Single \geq \Linear$, we can confine the ratio of welfare and profit through the optimal standard contract by comparing it to the ratio of welfare and profit attained through the optimal linear contract, expressed as $\frac{\W}{\Single} \leq \frac{\W}{\Linear}$. Consequently, our focus is directed towards scrutinizing the ratio of welfare and profit achieved through the optimal linear contract, denoted as $r = \frac{\W}{\Linear}$.

A breakpoint $\alpha$ marks the threshold at which the agent's chosen action profile undergoes a transition. It's important to recall that the expected reward by an action profile $\boldsymbol{a}$ is denoted by $R_{\boldsymbol{a}}$, while the associated expected cost is represented as $c_{\boldsymbol{a}}$. For any two action profiles $\boldsymbol{a}$ and $\boldsymbol{a}'$ where $R_{\boldsymbol{a}}<R_{\boldsymbol{a}'}$, we denote $\alpha_{\boldsymbol{a},\boldsymbol{a}'}$ as the breakpoint where the agent becomes indifferent between adopting action profiles $\boldsymbol{a}$ and $\boldsymbol{a}'$. Below this threshold, the agent will favor action profile $\boldsymbol{a}$, while beyond it, she will opt for $\boldsymbol{a}'$. Specifically, the breakpoint $\alpha_{\boldsymbol{a},\boldsymbol{a}'}$ satisfies the equation:
\begin{align*}
\alpha_{\boldsymbol{a},\boldsymbol{a}'} R_{\boldsymbol{a}}-c_{\boldsymbol{a}}=\alpha_{\boldsymbol{a},\boldsymbol{a}'}R_{\boldsymbol{a}'}-c_{\boldsymbol{a}'}.
\end{align*}
Drawing a parallel to Observation 6 in \cite{dutting2019simple}, for any two action profiles $\boldsymbol{a}$ and $\boldsymbol{a}'$ where $R_{\boldsymbol{a}}< R_{\boldsymbol{a}'}$ and the breakpoint is $\alpha_{\boldsymbol{a},\boldsymbol{a}'}= \frac{c_{\boldsymbol{a}}-c_{\boldsymbol{a}'}}{R_{\boldsymbol{a}}-R_{\boldsymbol{a}'}}$, the following inequality holds:
\begin{align*}
(R_{\boldsymbol{a}'}-c_{\boldsymbol{a}'})-(R_{\boldsymbol{a}}-c_{\boldsymbol{a}})\le (1-\alpha_{\boldsymbol{a},\boldsymbol{a}'})R_{\boldsymbol{a}'}.
\end{align*}
 Sorting the breakpoints in ascending order, and  defining $\alpha_{0,1}=1, R_{\boldsymbol{a}_{0}}=c_{\boldsymbol{a}_{0}}=0$, we assume there are $n$ breakpoints. We can bound $r$ by the number of breakpoints $n$ as follows:
\begin{align*}
\W=\max_{\boldsymbol{a}}(R_{\boldsymbol{a}}-c_{\boldsymbol{a}})&=(R_{\boldsymbol{a}_n}-c{\boldsymbol{a}_n})\\&=\sum_{i=0}^{n-1}(R_{\boldsymbol{a}_{i+1}}-c_{\boldsymbol{a}_{i+1}})-(R_{\boldsymbol{a}_{i}}-c_{\boldsymbol{a}_{i}})\\&\le \sum_{i=0}^{n-1}(1-\alpha_{i,i+1})R_{\boldsymbol{a}_{i+1}}\\&\le n \Single.
\end{align*}
It's important to note that there are at most $N_1(N_2)^S$ action profiles; hence, at most $N_1(N_2)^S$ breakpoints. However, this upper bound is not tight. 

If every incentivized final action is fixed, there are at most $N_1$ breakpoints between. Specifically, considering the breakpoints on every state $s$, it's evident that there are at most $N_2$ breakpoints on each state representing different incentivized final actions. If we arrange these breakpoints across all states, there are at most $SN_2$ breakpoints. Due to the definition of the breakpoints, if $\alpha$ lies within any two neighboring breakpoints, the incentivized final action on every state remains consistent. Thus, within this range, the incentivized action profiles change only in initial actions, and each initial action is incentivized only once. This eliminates instances where initial action $i$ is incentivized at $\alpha^1$ and $\alpha^3$, while initial action $i+1$ is incentivized at $\alpha^2$, where $\alpha^1<\alpha^2<\alpha^3$. Consequently, for any two neighboring breakpoints $\alpha_i$ and $\alpha_{i+1}$, there will be at most $N_1$ new breakpoints within. This is because we can view the two-stage delegation process as a standard process with only initial actions and the intermediate state as the outcome. The expected reward of the incentivized final action on every state serves as the reward for every outcome. Consequently, there are at most $N_1$ breakpoints, implying that every initial action is incentivized once.

Based on this property, we deduce that there are at most $N_1SN_2$ breakpoints, and thus $r\le O(N_1N_2S)$. This confirms the original statement.

\end{proof}
\subsection{Proof of Claim \ref{claim_special}}
\begin{proof}
In the deterministic first-stage process, without loss of generality, let's set $n=S = N_1=\min\{S,N_1\}$. If $S < N_1$, some initial actions will dominate others, which we can straightforwardly discard. Similarly, if there are states with no initial actions leading to them, we discard those states.

To establish the original statement, we first demonstrate that the deterministic first-stage process can be reduced to a standard contract instance with $nN_2$ actions. This reduction can be performed straightforwardly by following the approach in Theorem \ref{theorem_median_exa}. Then, similar with  Theorem \ref{theorem_median_up}, we can bound the welfare-to-profit ratio under the optimal linear contract, denoted as  $r = \frac{\W}{\Linear}$. Consequently, we only need to bound the number of breakpoints as in Theorem \ref{theorem_median_up}. Given that we have already reduced the deterministic first-stage process to a standard contractual process with $nN_2$ actions, the number of breakpoints is bounded by $nN_2$. Hence, we can assert the original statement.

Combining all these insights, we can conclude the original statement.

For the stochastic first-stage process, the result follows directly based on Theorem \ref{theorem_median_up}.
\end{proof}
    \subsection{Proof of Claim \ref{theorem_tree}}\label{proof_tree}
\begin{proof}
(1) We first show that $\M=\Single$:

Indeed, it's evident that $\M \geq \Single$, as the range of pay-halfway contracts encompasses standard contracts. Therefore, we only need to establish $\Single \geq \M$.

Let the optimal pay-halfway contract be denoted by $(\boldsymbol{s}, \boldsymbol{t})$, where $s_s$ represents the payment assigned to intermediate state $s$, and $t_m$ represents the payment for outcome $m$. We aim to show that a standard contract with payment $t_m+s_s$ for outcome $m$ with predecessor state $s$ can yield equivalent profit as the optimal pay-halfway contract by demonstrating that it incentivizes the same action and incurs the same expected payment.

Initially, we establish that this standard contract incentivizes identical final actions at each intermediate state. Assuming the optimal pay-halfway contract incentivizes action $j^s$ at intermediate state $s$, the following holds:
\begin{align*}
\sum_{m=1}^M F_{j^s,m}^st_m-c_{j^s}^s\ge \sum_{m=1}^M F_{j,m}^st_m-c_j^s, \quad \forall j\in [N_2].
\end{align*}
For the standard contract, considering each final action $j$, the following holds:
\begin{align*}
\sum_{m=1}^M F^s_{j^s,m}(t_m+s_s)-c_{j^s}^s&=\sum_{m=1}^M F^s_{j^s,m}(t_m)-c_{j^s}^s+s_s \\&\ge \sum_{m=1}^M F^s_{j,m}t_m-c_j^s+s_s \\&=\sum_{m=1}^M F^s_{j,m}(t_m+s_s)-c_j^s.
\end{align*}
Thus, it incentivizes the same final action as the optimal pay-halfway contract at every intermediate state.

Next, we demonstrate that this standard contract incentivizes the same initial action $i^*$. Let $U^s(j^s)$ denote the expected utility for agent by taking action $j^s$ at intermediate state $s$, i.e., $$U^s(j^s)=\sum_{m=1}^M F^s_{j^s,m}t_m-c_{j^s}^s.$$ 

Therefore, the expected utility for each intermediate state $s$ is $U^s=U^s(j^s)+s_s$. For the standard contract, the expected utility for each intermediate state $s$ is as follow:
\begin{align*}
    U_s=\sum_{m=1}^MF^s_{j^s,m}(t_m+s_s)-c_{j^s}^s=U^s(j^s)+s_s=U^s.
\end{align*} Since the utility remains consistent, the assertion holds.

Finally, we establish that the standard contract incurs the same expected payment as the optimal pay-halfway contract. The expected payment for the optimal pay-halfway contract is given by $$T=\sum_{s=1}^SF_{i^*,s}(s_s+\sum_{m=1}^MF_{j^s,m}t_m).$$ The expected payment for the standard contract is $$T'=\sum_{s=1}^SF_{i^*,s}\sum_{m=1}^MF^s_{j^s,m}(t_m+s_s)).$$ The following equation demonstrates that they incur the same expected payment:
\begin{align*}
T&=\sum_{s=1}^SF_{i^*,s}(s_s+\sum_{m=1}^MF_{j^s,m}t_m)\\
&=\sum_{s=1}^SF_{i^*,s}\sum_{m=1}^MF_{j^s,m}(t_m+s_s))=T'.
\end{align*}

Combining these elements yields the desired result.

(2) We now proceed to show that $\T=\Single$:

Firstly, let's delve into the scenario where we denote the optimal standard contract as $\boldsymbol{t}$. In this context, it becomes apparent that the terminate-halfway contract $(\boldsymbol{t},\emptyset)$ generates the same profit as the optimal standard contract. This equivalence suggests that the expected profit from the terminate-halfway contract, denoted as $\T$, is greater than or equal to the profit from the optimal standard contract, $\Single$.

Conversely, let's suppose that the optimal terminate-halfway contract is represented by $(\boldsymbol{t}^*, \mathcal{S}^*)$. We define the outcome set $\mathcal{M}^*$ as the collection of outcomes for which the preceding state belongs to $\mathcal{S}^*$. Formally, for every outcome $m \in \mathcal{M}^*$, there exists a unique intermediate state $s \in \mathcal{S}^*$ such that $F_{s,m} \ge 0$. This formulation implies that the optimal terminate-halfway contract ensures zero payment for all outcomes within $\mathcal{M}^*$.

Utilizing this insight, we can construct a standard contract as follows: for any outcome $m \in \mathcal{M}^*$, the payment is set to be zero, while for other outcomes, the payment mirrors that specified in the optimal terminate-halfway contract. This adaptation of the standard contract guarantees at least the same performance as the optimal terminate-halfway contract, as both contracts incentivize the same action profiles with the same expected payment.

In summary, considering these observations collectively leads us to the conclusion.
\end{proof}

\subsection{Proof of Claim \ref{theorem_median_r}}
\begin{proof}
Since there's only one initial action, assigning a transfer to the intermediate state would serve no purpose in incentivizing optimal action profile. Therefore, we can straightforwardly deduce the result.

\end{proof}

\section{Construction of Instances}\label{app:proof}

\subsection{Instance of the Pay-halfway Contract with Significantly-reduced Payments}
\noindent\emph{Proof of Theorem \ref{theorem_median_cost}.}
    In this construction, we have $3$ intermediate states $s_1, s_2, s_3$, $2$ initial actions $a_1, a_2$ with zero cost, and $2$ final actions $a^s_1, a^s_2$, where $a^s_2$ represents the null final  action. The outcomes $1$ and $2$ result in rewards $0$ and $x$ ($x>0$) respectively.
\begin{itemize}
    \item In state $s_1$, both final actions have zero cost, and final action $a^s_1$ assigns probability $1$ to outcome $2$.
    \item  In state $s_2$, final action $a^s_1$ assigns probability $1$ to outcome $2$ with cost $c$ ($c< x$), and the null final action $a^s_2$ definitely leads to outcome $1$.
\item     In state $s_3$, both final actions have zero cost, leading to outcome $1$.
\end{itemize}

Initial action $a_1$ assigns probability $p$ to state $1$ and $1-p$ to state $3$, while $a_2$ assigns probability $q$ to state $1$ and $1-q$ to state $2$, where $p>q$.

Through simple observation, it becomes evident that the principal only needs to consider the final action $a_1^s$ for each state, as it is the sole action yielding a positive reward. Consequently, the principal only needs to determine which initial action to incentivize.
Upon computation, the minimum-payment pay-halfway contract to incentivize action $a_1$ assigns a payment of $0$ to every outcome, resulting in a profit of $px$ and welfare of $px$. However, to incentivize $a_2$ and $a^s_1$, the minimum-payment pay-halfway contract assigns $\frac{p-q}{1-q}c$ to state $2$ and $c$ on outcome $2$. Hence, the principal realizes a profit of $x-(1+p-q)c$ and welfare of $x-(1-q)c$. Let's assume $x-(1+p-q)c > px$, in other words,  $x > \frac{1+p-q}{1-p}c$. Under this condition, it is straightforward to deduce that $(a_2,a^s_1)$ represents the optimal action profile. Moreover, the expected payment for the optimal pay-halfway contract to incentivize the optimal action profile is then $(1+p-q)c$.

On the other hand, if a standard contract aims to incentivize the optimal action profile $(a_2,a^s_1)$,  it is straightforwardly derived that it should assign a payment of $0$ to outcome $1$. Let the payment of outcome $2$ be denoted as $r$. Then the following inequality holds:
\begin{align*}
pr &\leq rq + (1-q)(r-c), \\
r &\geq \frac{(1-q)c}{1-p}.
\end{align*}

The ratio of the payments is:
\begin{align*}
\frac{r}{(1+p-q)c} = \frac{1-q}{(1+p-q)(1-p)} \ge \frac{1-q}{1-p^2}.
\end{align*}

By letting $q$ to be a constant and then allowing $p$ to approach $1$, we can derive the desired result.

\subsection{Tight Counterexample of Pay-halfway Contract in
Deterministic First-stage Process}
\noindent\emph{Proof of Theorem \ref{theorem_median_exa}.}
    Suppose there are two outcomes with rewards $r_1=0$ and $r_2=r$ (where $r$ is large enough to satisfy later probability constraints). 
    
    There are also $N_1+1$ initial actions leading to $N_1+1$ different intermediate states. Initial action $a_{N_1+1}$ incurs a cost of $c=\sum_{k=1}^{(N_1+1)N_2+1}\lambda^k$ (where $\lambda$ is a large constant) to state $s_{N_1+1}$ while any other initial action $i$ incurs zero cost and leads to state $i$. Additionally, assume that the final action $a_{N_2+1}^s$ is a zero-cost action. 

In state $s_{N_1+1}$, every final action assigns probability $p$ to outcome $2$ with zero cost, resulting in an expected reward of $R=c+(N_1+1)N_2+1$.

Considering final action $a^s_j$ (for $j\in [N_2]$) in state $s$ where $s\ne N_1+1$, the cost will be $c_j^s=\sum_{k=1}^{sN_2+j} \lambda^k$, and the probability $p^s_j$ to outcome $2$ will ensure that the expected reward is $R^s_j=c_j^s+sN_2+j$.

If we opt for a pay-halfway contract, assigning payment $c$ to state $s_{{N_1}+1}$ suffices. This contract will incentivize initial action $a_{N_1+1}$ and then extract the total welfare $\W=R-c=(N_1+1)N_2+1$.

On the contrary, for the optimal standard contract, we can simplify the deterministic first-stage process to a standard contract instance with $(N_1+1)N_2$ actions. For each initial action that deterministically leads to one state, consider any pair of actions $(a^1, a^2)$, where $a^1$ represents an initial action and $a^2$ a final action. It assigns distribution $\mathcal{F}^s_{a^2}$ to the outcomes, where $s$ is the intermediate state that the initial action $a^1$ leads to. Consequently, the deterministic first-stage  process can be reduced to an instance of the standard contractual process with $(N_1+1)N_2$ actions. 

  For a standard contract instance, with two outcome (one with zero reward), the optimal standard contract will yields the same profit as the optimal linear contract.  To prove this, we only need to show that the optimal standard contract assigns a payment of $0$ to outcome with zero-reward. Thus, we demonstrate that the minimal payment to incentivize an optimal action 
 assigns a payment of $0$ to outcome with zero reward. Suppose the optimal standard contract assigns $y$ to the outcome with zero-reward and $x$ to the other outcome, incentivizing action $a^*$. Let $p_a$ represent the probability towards the outcome with a non-zero reward if action $a$ is taken. Then, for any other action $a$, the following inequalities hold:
\begin{align*}
    p_{a^*}x+(1-p_{a^*})y-c_{a^*}&\ge p_{a}x+(1-p_a)y-c_a\\
    (p_{a^*}-p_a)(x-y) &\ge c_{a^*}-c_a
\end{align*}

This leads to $x\ge \frac{c_{a^*}-c_a}{p_{a^*}-p_a}+y$ for action $a$ where $c_a<c_{a^*}$, and $x\le \frac{c_{a^*}-c_{a'}}{p_{a^*}-p_{a'}}+y$ for action $a'$ where $c_{a'}>c_{a^*}$. Sorting these actions by cost, we obtain $ \frac{c_{a^*}-c_a}{p_{a^*}-p_a}\le x-y\le \frac{c_{a^*}-c_{a'}}{p_{a^*}-p_{a'}}$.

According to the property of the optimal standard contract, $x$ and $y$ should be the minimal payments to incentivize such actions. Thus, $y=0$ is easily obtained.

Therefore, we narrow our focus to linear contracts to approximate the maximum profit achievable by the optimal standard contract. When considering a linear contract, the optimal fraction $\alpha$ for the optimal linear contract must correspond to a breakpoint. Thus, we  compute all the breakpoints, among which the optimal linear contract selects one. Consequently, the profit of the optimal linear contract can be expressed as:
    \begin{align*}
        \max_i{(1-\alpha_i)R_{i+1}}= \max_i (1-\frac{\lambda^{i+1}}{\lambda^{i+1}+1})(i+1+\sum_{k=1}^{i+1}\lambda^k)=\frac{i+1+\sum_{k=1}^{i+1}\lambda^k}{\lambda^{i+1}+1}\le O(1).
    \end{align*}
    
    Combining these, the ratio between the optimal pay-halfway contract and the optimal standard contract should be:
    \begin{align*}
        \M/\Single\ge \frac{(N_1+1)N_2+1}{O(1)}=\Omega(N_1N_2).
    \end{align*}

\subsection{Tight Counterexample of Terminate-halfway Contract in Deterministic First-stage Process}\label{proof_terminal_d_exa}
\noindent\emph{Proof of Theorem \ref{theorem_terminal_d_exa}.}
We analyze the same instance construction as described in Theorem \ref{theorem_median_exa}. Assume there are two outcomes with rewards $r_1 = 0$ and $r_2 = r$, where $r$ is sufficiently large to satisfy subsequent probability constraints. 

Additionally, there are $N_1 + 1$ initial actions, each leading to a distinct intermediate state among $N_1 + 1$ different intermediate states. The initial action $a_{N_1+1}$ incurs a cost of $c = \sum_{k=1}^{(N_1+1)N_2+1} \lambda^k$ (where $\lambda$ is a large constant) to reach state $s_{N_1+1}$, whereas any other initial action $i$ incurs zero cost and leads to state $i$.

In state $s_{N_1+1}$, every final action assigns a probability $p$ to outcome $2$ at zero cost, resulting in an expected reward of $R = c + (N_1+1)N_2 + 1$.

For a final action $a^s_j$ in state $s \neq s_{N_1+1}$, the cost is given by $c_j^s = \sum_{k=1}^{sN_2+j} \lambda^k$, and the probability $p^s_j$ assigned to outcome $2$ will ensure that the reward is $R^s_j = c_j^s + sN_2 + j$.

If we opt for a terminate-halfway contract, assigning a payment of $c$ to outcome $2$ and terminating the delegation process for states other than state $s_{N_1+1}$ is sufficient. This contract will incentivize the selection of initial action $a_{N_1+1}$, thereby extracting the total welfare $\W = R - c = (N_1+1)N_2 + 1$.

In contrast, for the optimal standard contract, the profit can be expressed as $O(1)$. (As shown in Theorem \ref{theorem_median_exa})

Combining these results,  we obtain the conclusion.

\subsection{Tight Counterexample of Terminate-halfway Contract in Stochastic First-stage Process}
\noindent\emph{Proof of Theorem \ref{theorem_terminal_r_example}.}
Let us consider a scenario characterized by $S+3$ potential outcomes and $2S+1$ intermediate states. The initial action has an equal probability of leading to each intermediate state. Within each state, the final action $N_2+1$ is defined as a null action, directly leading to outcome $1$.

In this scenario, the rewards for the $S+3$ outcomes are as follows:  $r_1=0, r_3=0, r_4=0,...,r_{S+3}=0$ and $r_2=r$ (where $r$ is sufficiently large). For a final action $j$ taken in state $s \in [S]$ , the associated cost is $c_j^s=\sum_{k=1}^{sN_2+j} \lambda^k$ (where $\lambda$ is a large constant), and the expected reward of this action is  $R^s_j=c_j^s+sN_2+j$.

Furthermore, the probability distribution for each final action $j$ leading to outcomes at state $s\in[S]$ is as follows:
  \begin{alignat*}
    \forall s \in [S], j \in [N_2], \quad
    &&F^{s}_{j,m} &= \begin{cases}
    1-\epsilon-\frac{R_j^s}{r}
     & \text{ if } m=1 \\
      \frac{R^s_j}{r}                                  & \text{ if } m=2 \\
      \epsilon \indicator{j = N_2}                 & \text{ if } m = s+2 \\
      \epsilon \indicator{j \not = N_2} & \text{ if } m =S+3  \\
      0 &\text{otherwise}\\
    \end{cases}.
    \end{alignat*}
Concerning the state $s\in [S+1,2S+1]$, the probability distributions are specified as follows:
    
    \begin{alignat*}
        \forall j \in [N_2], m \in [S+3], \quad
    &&F^{S+1}_{j,m} &= \begin{cases}
      1      & \text{ if } m = s-S+2 \\
     0 & \text{ otherwise}
    \end{cases}.
  \end{alignat*}
  
Let us now focus on the optimal terminate-halfway contract. In this contract, we assign a payment of  $t_{s+2}=\frac{c^s_{N_2}}{\epsilon}$ to outcome $s+2$,  and terminate the contract if the state $s$ falls within the range $s\in [S+1,2S+1]$.

Under this contract design, the agent is incentivized to select the final action $N_2$ in  states $s\in [S]$, and to terminate the delegation process if any other state is reached.  This is because choosing any final action other than $N_2$ results in a non-positive reward for the agent.  For the final action $N_2$, the agent's expected reward is $\epsilon t_{s+2} - c_{N_2}^s = 0$. Consequently, the expected profit for each state $s\in[S]$ under the terminate-halfway contract is as follows:

\begin{align*}
    \T^s=\sum_{m=1}^{S+3}F^s_{N_2,m}(r_m-t_m)=\frac{R_{N_2}^s}{r}r-\epsilon t_{s+2}= R_{N_2}^s-c_{N_2}^s=(s+1)N_2.
\end{align*}

The expected profit at state $s\in[S+1,2S+1]$ is zero, denoted as $\T^{s}=0$.
Consequently, the total profit is calculated as follows:
\begin{align*}
    \T=\frac{1}{2S+1}\sum_{s=1}^{2S+1} \T^s=\sum_{s=1}^S \frac{(s+1)N_2}{2S+1} =\Omega (SN_2).
\end{align*}

Before delving into the optimal standard contract, let's conduct a welfare analysis of the constructed scenario. For each state $s\in[S]$, the maximal welfare $\W^s$ is defined as:
\begin{align*}
    \W^s=\max_{j\in [N_2+1]}(R^s_j-c^s_j)=\max_{j\in [N_2+1]} (sN_2+j)=(s+1)N_2.
\end{align*}

The welfare at state $s\in[S+1,2S+1]$ is $0$. Therefore, the cumulative welfare is calculated as:
\begin{align*}
    \W=\frac{1}{2S+1}\sum_{s=1}^{2S+1} \W^s=\sum_{s=1}^S \frac{(s+1)N_2}{2S+1}=\frac{(S+3)SN_2}{2(2S+1)}.
\end{align*}

Now, let's shift our focus to the optimal standard contract. We start by demonstrating that, in the optimal standard contract, minimal payment is allocated to outcomes other than $1$ and $2$.

Let $\boldsymbol{t}=(t_1,t_2,t_3,...,t_{S+3})$ denote the optimal standard contract. We initially establish $t_{s}< S(2S+1)N_2$ for $s\in [3,S+3]$.

We substantiate this claim through a proof by contradiction. Assuming $t_s\ge S(2S+1)N_2$, it would lead to the highest profit $\W-\frac{1}{(2S+1)}\cdot t_s\le 0$, resulting in a contradiction.

Therefore, we only need to consider the case where at most $t_1$ and $t_2$ exceed $S(2S+1)N_2$. Moreover, by choosing $\epsilon$ to be sufficiently small, the influence of the remaining payments on the agent’s incentives becomes negligible. 

 At state $s\in[S]$, if the agent aims to incentivize action $j$, the following inequality holds:
\begin{align}
    \sum_{m=1}^{S+3}F^s_{j,m}t_m-c^s_j\ge \sum_{m=1}^{S+3}F^s_{j',m}t_m-c^s_{j'},\quad \forall j'\ne j.
    \end{align}

  By considering separate cases, the following specific inequalities are derived:
    \begin{align}
    \begin{cases}
    F^s_{j,1}t_1+F^s_{j,2}t_2-c^s_j+\epsilon t_{s+2}-\epsilon t_{S+3} \ge F^s_{j',1}t_1+F^s_{j',2}t_2-c^s_{j'}, \quad j=N_2, j'\ne j \in[N_2+1]\\
    F^s_{j,1}t_1+F^s_{j,2}t_2-c^s_j \ge F^s_{j',1}t_1+F^s_{j',2}t_2-c^s_{j'}, \quad j,j'\ne N_2, j'\ne j \in[N_2+1]\\
    F^s_{j,1}t_1+F^s_{j,2}t_2-c^s_j-\epsilon t_{s+2}+\epsilon t_{S+3} \ge F^s_{j',1}t_1+F^s_{j',2}t_2-c^s_{j'}, \quad j\ne N_2, j'=N_2
    \end{cases}.\label{eq}
\end{align}

In the context of a standard contract, we can simplify the scenario by exclusively considering the first two outcomes. This simplification is justified by the fact that the probabilities leading to outcomes $1$ and $2$ collectively amount to $1-\epsilon$. Additionally, the payments assigned to the outcomes other than $1$ and $2$ primarily function to modify the cost of the final action, with the adjustment capped at $S(2S+1)N_2\epsilon$.

  Similar to Theorem \ref{theorem_median_exa}, we show that the minimal payment required to incentivize an optimal action profile assigns $0$ to outcome with zero reward when there are only two possible outcomes. Suppose at state $s$, $p^s_a$ is the probability that final action $a$ leads to the non-zero outcome. Assume  the optimal standard contract assigns $y$ to the outcome with zero reward and $x$ to the other outcome. To incentivize the optimal action profile $\boldsymbol{a}=(a^1,a^2,...,a^S)$ where $a^s$ is the incentivized final action at state $s$, we derive the following inequalities:
\begin{align*}
    p^s_{a^s}x+(1-p^s_{a^s})y-c^s_{a^s}&\ge p^s_{a^{s'}}x+(1-p^s_{a^{s'}})y-c^s_{a^{s'}},\quad \forall s \in [S], a^{s'} \in [N_2],\\
    (p^s_{a^s}-p^s_{a^{s'}})(x-y) &\ge c^s_{a^s}-c^s_{a^{s'}}.
\end{align*}
From these, we deduce $x\ge \frac{c^s_{a^s}-c^s_{a}}{p^s_{a^s}-p^s_{a}}+y$ for action $a$ where $p^s_{a^s}>p^s_{a}$, and $x\le \frac{c^s_{a^s}-c^s_{a'}}{p^s_{a^s}-p^s_{a'}}+y$ for action $a'$ where $p^s_{a^s}<p^s_{a'}$. By sorting these actions, we obtain $ \frac{c^s_{a^s}-c^s_{a}}{p^s_{a^s}-p^s_{a}}\le x-y\le \frac{c^s_{a^s}-c^s_{a'}}{p^s_{a^s}-p^s_{a'}}$. 
According to the property of the optimal standard contract, $x$ and $y$ should be the minimal payments to incentivize such actions, thus leading to $y=0$.

Again note that we can interpret the assignment of payment to the outcomes other than
$1$ and $2$ as a modification of payment and given that the probabilities associated with outcomes $1$ and $2$ collectively sum up to $1-\epsilon$, the aforementioned inequality consistently holds true.

Taking into account the above analysis, we conclude that the optimal standard contract will assign zero payment to outcome $1$.

Therefore, our attention turns to linear contracts involving the first two outcomes to approximate the maximum profit attained by the optimal standard contract. We compute all breakpoints, from which the optimal linear contract selects one as the fraction factor. Hence, the profit of the optimal linear contract is given by:
  \begin{align*}
   \Single&=\frac{1}{2S+1}\sum_{s=1}^{2S+1}\max_i{(1-\alpha_i)R^s_{j^*}}\\ &=\frac{1}{2S+1}\max_i (1-\frac{\lambda^{i+1}}{\lambda^{i+1}+1})(i+1+\sum_{k=1}^{i+1}\lambda^k+\sum_{s=1}^{S+1} \indicator{(s+1)N_2\le i}R^s_{N_2})\\&\le \frac{i+1+\sum_{k=1}^{i+1}\lambda^k}{\lambda^{i+1}+1}\le O(1).
    \end{align*}

Assigning payment to outcomes other than $1$ and $2$ essentially amounts to modifying the cost. Moreover, we choose $\epsilon$ to be significantly small. In case $1$, where the modified cost does not affect the incentivized action, the profit increases by at most a constant due to the cost changes. In case $2$, where the modified cost changes the incentivized action, as demonstrated in the previous analysis \eqref{eq}, it will, at most, lead to a switch in the final action from $N_2$ to $N_2-1$ or from $N_2-1$ to $N_2$. However, this adjustment will only marginally impact the profit, potentially altering it by a constant factor at most.

Considering these analyses, we observe that the ratio between the terminate-halfway contract and the profit under the optimal standard contract should be at least:
    \begin{align*}
        \T/\Single\ge \frac{\Omega(SN_2)}{O(1)}=\Omega(SN_2).
    \end{align*}
Thus, we arrive at our conclusion.


\subsection{Instance Where Pay-Halfway Contract Outperforms Terminate-Halfway Contract}

\noindent\emph{Proof of Theorem \ref{obs_seperation}.}
We consider the same instance construction as in Theorem \ref{theorem_median_cost}. Specifically, in this construction, we have $3$ states $s_1, s_2, s_3$, $2$ initial actions $a_1, a_2$ with zero cost, and $2$ final actions $a^s_1, a^s_2$, where $a^s_2$ represents the null action. The outcomes $1$ and $2$ result in rewards $0$ and $x$ ($x>0$) respectively.
\begin{itemize}
    \item In state $s_1$, both final actions have zero cost, and final action $a^s_1$ assigns probability $1$ to outcome $2$.
    \item  In state $s_2$, final action $a^s_1$ assigns probability $1$ to outcome $2$ with cost $c$ ($c< x$), and null action $a^s_2$ definitely leads to outcome $1$.
\item     In state $s_3$, both actions have zero cost, leading to outcome $1$.
\end{itemize}

Initial action $a_1$ assigns probability $p$ to state $1$ and $1-p$ to state $3$, while $a_2$ assigns probability $q$ to state $1$ and $1-q$ to state $2$, where $p>q$.

Following the proof, we know that the optimal action profile is $(a_2, a_1^s)$. If we use the terminate-halfway contract to incentivize this action profile, we have two options: either "blocking" state $s_1$ or not "blocking" any state.

When considering a terminate-halfway contract that does not block any state, it becomes equivalent to a standard contract, which yields less profit than the optimal pay-halfway contract. On the other hand, if the terminate-halfway contract blocks state $s_1$, it should assign a payment of $c$ for outcome $2$, resulting in a maximal profit of $(1-q)(x-c)$.

Comparatively, the profit from the optimal pay-halfway contract is $x - (1+p-q)c$. To determine which contract yields a higher profit, we set up the inequality:  $x-(1+p-q)c>(1-q)(x-c)$ Simplifying this inequality, we get: $x>\frac{p}{q}c$. Therefore, the pay-halfway contract yields higher profit. This demonstrates the superiority of the pay-halfway contract over the terminate-halfway contract.

\end{document}